\numberwithin{equation}{section}
\theoremstyle{definition}
\newtheorem{lemma}{Lemma}[section]
\newtheorem{cor}{Corollary}[section]
\newcommand{\beqa}{\begin{eqnarray}}
\newcommand{\eeqa}{\end{eqnarray}}
\newcommand{\beq}{\begin{equation}}
\newcommand{\eeq}{\end{equation}}
\newcommand{\calS}{\mathcal{S}}
\newcommand{\calL}{\mathcal{L}}
\newcommand{\calO}{\mathcal{O}}
\newcommand{\tG}{\textsf{G}}\newcommand{\tB}{\textsf{B}}
\newcommand{\gl}[2]{\ensuremath{\text{gl}\left({#1}|{#2}\right)}}
\newcommand{\un}[2]{\ensuremath{\text{u}\left({#1}|{#2}\right)}}
\newcommand{\pgl}[1]{\ensuremath{\text{pgl}\left({#1}|{#1}\right)}}
\newcommand{\sgl}[2]{\ensuremath{\text{sl}\left({#1}|{#2}\right)}}
\newcommand{\psl}[1]{\ensuremath{\text{psl}\left({#1}|{#1}\right)}}
\newcommand{\psu}[1]{\ensuremath{\text{psu}\left({#1}|{#1}\right)}}
\newcommand{\osp}[2]{\ensuremath{\text{osp}\left({#1}|{#2}\right)}}
\newcommand{\GL}[2]{\ensuremath{\text{GL}\left(\text{{#1}}|\text{{#2}}\right)}}
\newcommand{\U}[2]{\ensuremath{\text{U}\left(\text{{#1}}|\text{{#2}}\right)}}
\newcommand{\PSL}[1]{\ensuremath{\text{PSL}\left(\text{{#1}}|\text{{#1}}\right)}
}
\newcommand{\PSU}[1]{\ensuremath{\text{PSU}\left(\text{{#1}}|\text{{#1}}\right)}
}
\newcommand{\OSP}[2]{\ensuremath{\text{OSP}\left(\text{{#1}}|\text{{#2}}\right)}
}
\newcommand{\bgl}[1]{\ensuremath{\text{gl}\left(\text{{#1}}\right)}}
\newcommand{\bsgl}[1]{\ensuremath{\text{sl}\left(\text{{#1}}\right)}}
\newcommand{\bso}[1]{\ensuremath{\text{so}\left(\text{{#1}}\right)}}
\newcommand{\bsp}[1]{\ensuremath{\text{sp}\left(\text{{#1}}\right)}}
\newcommand{\BGL}[1]{\ensuremath{\text{GL}\left(\text{{#1}}\right)}}
\newcommand{\BSO}[1]{\ensuremath{\text{SO}\left(\text{{#1}}\right)}}
\newcommand{\BSP}[1]{\ensuremath{\text{SP}\left(\text{{#1}}\right)}}
\newcommand{\CP}[1]{\ensuremath{\text{$\mathbb{CP}$}^{\text{{#1}}-1|\text{{#1}}}
}}
\newcommand\str{\text{str}}
\newcommand\sdim{\text{sdim}\,}
\newcommand\id{\mathbbm{1}}
\newcommand{\vac}[1]{\ensuremath{\left< \, #1\, \right>}}
\DeclareMathOperator{\Ker}{\textsf{Ker}}
\DeclareMathOperator{\Img}{\textsf{Im}}
\DeclareMathOperator{\Coh}{\textsf{H}}
\DeclareMathOperator{\Ad}{Ad}
\DeclareMathOperator{\rank}{rank}
\DeclareMathOperator{\topmod}{top}
\title{\textbf{Cohomological Reduction of Sigma Models}}
\author{Constantin Candu$^1$, Thomas Creutzig$^2$,\\ Vladimir Mitev$^1$,
Volker Schomerus$^1$\bigskip \\$^1$DESY
Hamburg, Theory Group, \\
Notkestrasse 85, D--22607 Hamburg, Germany\\$^2$Department of Physics and
Astronomy\\
University of North Carolina, Phillips Hall, CB 3255\\
Chapel Hill, NC 27599-3255, USA
\bigskip\\ E-mail: \text{Constantin.Candu@desy.de},
\text{creutzig@physics.unc.edu},\\ \text{Vladimir.Mitev@desy.de},
\text{Volker.Schomerus@desy.de}}
\date{}
\begin{document}
\begin{titlepage}
    \maketitle
    \begin{abstract}
This article studies some features of quantum field theories with
internal supersymmetry, focusing mainly on 2-dimensional
non-linear sigma models
which take values in a coset superspace.
It is discussed how BRST operators from the target space
supersymmetry algebra can be used to identify subsectors which are
often simpler than the original model and may allow for an
explicit computation of correlation functions. After an extensive
discussion of the general reduction scheme, we present a number of
interesting examples, including symmetric superspaces
$G/G^{\mathbb{Z}_2}$ and coset superspaces of the form
$G/G^{\mathbb{Z}_4}$.

\end{abstract}
\vspace*{-13cm} {\tt {DESY 10-003}}
%
\\
\end{titlepage}

\tableofcontents

\addtolength{\baselineskip}{3pt}
\section{Introduction}

Due to their appearance in many quite distinct areas of physics,
non-linear sigma models with target space (i.e.\ internal)
supersymmetry have been the subject of much interest lately. One
area in which they appear are the proposed dualities between
string theories in Anti de Sitter backgrounds $AdS_{n+1}\times M$
and conformal gauge theories, the most well known examples being
$AdS_5\times S^5$ and $AdS_4 \times \mathbb{CP}^3$ which are
described e.g.\ in \cite{Metsaev:1998it,Berkovits:2000fe,
Stefanski:2008ik, Fre:2008qc,Arutyunov:2009ga, Arutyunov:2008if}. Other lines of
applications involve dense polymers in two dimensions
\cite{Read:2007ti, Candu:2009pj}, the quantum Hall plateau
transitions \cite{Zirnbauer:1999ua} or disordered electron systems
\cite{Wiegmann:1988qn}.

Sigma models on target superspaces possess a number of surprising
properties which are gradually being uncovered. In particular,
there exists several basic series of models which give rise to
families of conformal field theories with continuously varying
exponents, including the supergroup manifolds PSL(N$|$N),
OSP(2N+2$|$2N) and a number of quotients thereof
\cite{Berkovits:1999im,Bershadsky:1999hk,Zirnbauer:1999ua,Read:2001pz,
Babichenko:2006uc}. Note that quantum conformal symmetry does not
require the addition of any Wess-Zumino term, in contrast to the case of
purely bosonic target spaces.

Solving conformal field theories with continuously varying
exponents requires developing entirely new techniques which go far
beyond the conventional algebraic methods. Numerical and algebraic
studies of lattice discretizations \cite{Read:2001pz,Candu:2008vw,
Candu:2008yw} and supersymmetry aided all-order perturbative
computations of spectra \cite{Mitev:2008yt,Candu:2009ep} have been
applied with astonishing results. In some cases is was possible to
determine exact formulas for all (boundary) conformal weights as a
function of the continuous couplings (moduli) of the models.

Having gained some control over the weights it is a natural next
step to investigate properties of higher correlation functions.
While general correlators seem way out of reach, we will be able
to gain useful insights into correlation functions involving a
special subset of fields. Some inspiration can be taken from the
study of conformal field theories with $N=(2,2)$ world-sheet
supersymmetry. For such models, a very conventional trick that one
exploits through the so-called topological twists, is to identify
special subsectors whose dependence on the couplings can be
brought under control. The idea is to employ a fermionic
world-sheet symmetry generator as a BRST operator and to select
its cohomology as the relevant subsector. If the action of the
model is trivial in cohomology the correlation functions of
subsector operators do not depend on the coupling constants of the
theory. Such correlators can then be calculated in the (classical)
limit, as described for example in \cite{Labastida:1997pb}.

The models we are interested in possess target space rather than
world-sheet supersymmetry. A natural idea then is to promote an
internal nilpotent symmetry to a BRST operator. In following this
lead, we shall uncover a rather remarkable structure. Suppose we
are starting with a sigma model on the quotient $G/G'$, defined
as the set of right $G'$ cosets in $G$, with
$G'$ being some sub-supergroup  of the supergroup $G$. Let $Q$ be
some fermionic generator in the superalgebra $\mathfrak{g}'
\subset \mathfrak{g}$ such that $Q^2 =0$. Note that such a $Q$ is
a symmetry of the $G/G'$ sigma model. Through its cohomology, $Q$
defines a subsector. Quite remarkably, the latter turns out to
form the state space of another sigma model on the coset superspace $H/H'$ with a
new pair of supergroups $H' \subset H$. The target space $H/H'$
has smaller dimension than $G/G'$ and the symmetry algebra
$\mathfrak{h}$ of the reduced theory is contained in the symmetry
algebra $\mathfrak{g}$. In many cases, further reduction is
possible until the procedure terminates because the remaining
symmetry algebra does not contain any further nilpotent generators.
Thereby, we obtain a chain of models
$\left\{\mathcal{M}_{\alpha}\right\}_{\alpha\in A}$ which is
parametrized by elements $\alpha$ of some partially ordered set
$A$. The model $\mathcal{M}_\alpha$ is a subsector of
$\mathcal{M}_\beta$, i.e.\  $\mathcal{M}_{\alpha}\subset
\mathcal{M}_{\beta}$, whenever $\alpha < \beta$. Let us give just
one example here. It is provided by the following family of symmetric
superspaces
\begin{equation} \label{eq:ex1M}
\mathcal{M}^{{\text{U/U$^2$}}}_{(\alpha_1,\alpha_2)}(R,S)  \ \cong
\ \frac{\U{R+$\alpha_1$}{$\alpha_1$}}{\U{S+$\alpha_2$}{$\alpha_2$}
\times \U{$\alpha_1$-$\alpha_2$}{R-S+$\alpha_1$-$\alpha_2$}} \ \
\end{equation}
where $R,S, R-S$ and $\alpha_1,\alpha_2, \alpha_1-\alpha_2$ are
all taken to be non-negative integers. The family \eqref{eq:ex1M}
includes the complex projective spaces $\mathbb{CP}^{\text{R}+
\alpha_1-1|\alpha_1}$ for $S =1$ and $\alpha_2 = 0$.

In order to select conformal quotients from the list \eqref{eq:ex1M},
we note that a theory can only be conformally invariant if all of
its subsectors are. As it was argued for instance in
\cite{Babichenko:2006uc}, vanishing of the one loop beta function
requires that $R=0$. Hence the only candidates for conformal
quotients are to be found in the families $\mathcal{M}_\alpha(0,S)$.
The smallest subsector in these families is obtained for $\alpha_1
= S$ and $\alpha_2$ = 0, so that it takes the simple form
U(S$|$S)$/$U(S)$\times$U(S). For $S=1$, this subsector is the
theory of free symplectic fermions. In all other cases, it is a
massive theory. Hence the only candidates for conformal quotients one
can find within the list \eqref{eq:ex1M} are of the form
\begin{equation} \label{eq:EX1CFT}
\mathcal{C}^{{\text{U/U$^2$}}}_{(\alpha_1,\alpha_2)} \ : \cong\
\mathcal{M}^{{\text{U/U$^2$}}}_{(\alpha_1,\alpha_2)}(0,1)  \ \cong
\ \frac{\U{$\alpha_1$}{$\alpha_1$}}{\U{1+$\alpha_2$}{$\alpha_2$}
\times \U{$\alpha_1$-$\alpha_2$}{$\alpha_1$-$\alpha_2$-1}}
\end{equation}
with $\alpha_1 > \alpha_2 \geq 0$. Later we shall argue that the
converse is also true: symmetric superspaces that possess a non-trivial
conformal subsector with central charge $c \neq 0$ are actually
conformal. Since the theory of free symplectic fermions has
central charge $c = -2 \neq 0$, all the models in the list
\eqref{eq:EX1CFT} give rise to conformal sigma models. The list
includes the complex projective superspaces
$\mathbb{CP}^{\alpha_1-1|\alpha_1}$ for which conformal invariance
has been established before (see e.g. \cite{Sethi:1994ch}
\cite{Read:2001pz}). We shall extend this discussion to arbitrary
compact symmetric superspaces in section 5.1. Within this class,
we shall thereby recover the complete classification of conformal
models from \cite{Candu:thesis}.

But our approach is more general. It also applies to all coset
superspaces $G/G'$ without any additional assumption on the
denominator subgroup $G'$. In section 5.2 we look at examples for
which $G'$ is fixed under the action of some automorphism of order
four, i.e.\ at quotients of the form $G/G^{\mathbb{Z}_4}$. Such
generalized symmetric spaces have become popular through the
investigation of strings in Anti de Sitter backgrounds. While we
are not aiming at an exhaustive investigation of quotients within
this class, we shall exhibit a few interesting examples, including
the family
\begin{equation} \label{eq:ex2M}
\mathcal{M}^{\text{U/OSP$^2$}}_{(\alpha_1,\alpha_2)}(S)  \ \cong
\
\frac{\PSU{2$\alpha_1$}}{\OSP{2(S+$\alpha_2$)}{2$\alpha_2$} \times
\OSP{2($\alpha_1$-$\alpha_2$)}{2($\alpha_1$-$\alpha_2$-$S$)}}
\end{equation}
with some obvious restrictions on the choice of $\alpha_i$ and $S$ such that all
supergroups are well-defined. Note that, provided the $\alpha_{i}$ are large
enough, the parameter $S$ may now assume any integer value, i.e.\ it can also be
negative.
The minimal
non-trivial subsector of these theories depends significantly on the
parameter $S$. It is given by
\begin{align}
\mathcal{R}^{\text{PSU/OSP$^2$}}(S) &\  \cong\ 
\frac{\PSU{2S}}{\text{SO(2S) $\times$ SO(2S)}} \ \ \ &\mbox{
for } \ \ S\ > \ 0 \ \ , \label{eq:ex2R>} \\[2mm]
\mathcal{R}^{\text{PSU/OSP$^2$}}(0) &\  \cong\ 
\text{symplectic fermions} \ \ \ &\mbox{ for }\ \ S
\ = \ 0 \ \ , \label{eq:ex2R0} \\[2mm]
\mathcal{R}^{\text{PSU/OSP$^2$}}(S) &\  \cong\  
\frac{\PSU{-2S}}{\text{SP(-2S) $\times$ SP(-2S)}} \ \ \
&\mbox{ for } \ \ S \ < \ 0 \ \ . \label{eq:ex2R<}
\end{align}
These are not conformal for $S\neq 0$ and reduce to a free theory for $S=0$.
The smallest interacting theory for $S=0$ is obtained for $\alpha_1=1$,
$\alpha_2=0$ and is the complex projective superspace
\begin{equation}
\frac{\text {\PSU{2}}}{\OSP{2}{2}}\cong \mathbb{CP}^{1|2}\ \ .
\end{equation}
For higher values of $\alpha_i$ however, the superspaces are not of the complex
projective type. It would be interesting to understand whether the family
\eqref{eq:ex2M} with $S=0$ is conformally invariant. We have little more to say
about this issue for now.

The series \eqref{eq:ex2M} contains a few other interesting
minimal subsectors. In fact, for the $S=1$, the minimal subsector
is given in eq.\ \eqref{eq:ex2R>}. After an appropriate change in
the choice of reality conditions, we obtain the coset geometry for
$AdS_2 \times S^2$ as defined in\cite{Berkovits:1999zq}. Similarly, if we set
$S=-2$ and perform again the appropriate change of the real form, we find
the quotient that appears in the description of $AdS_5 \times S^5$.
Throughout most of this text, we shall consider sigma models
without Wess-Zumino terms, mostly in order not to clutter the
presentation too much. We shall comment on the possible inclusion
of Wess-Zumino terms and the application to other 2-dimensional
field theories in the concluding section.

We finish this introduction with a short guide for the subsequent
sections. In the next section \ref{Definitions} we shall set the
stage by defining in detail the models that we are going to
consider. Subsections \ref{sec:cr_lsa} to \ref{sec:crl2} then present the main mathematical tools at our
disposal. Since these parts are a bit technical, we included a
non-technical summary in subsection 3.1. The impatient reader may
therefore skip subsections \ref{sec:cr_lsa} to \ref{sec:crl2}, at
least upon first reading. The mathematical background from section
3 is then used in section \ref{reductionfield} to prove the main
results of this work. In section \ref{reductionexamples} we shall
illustrate how the cohomological reduction works for symmetric
superspaces. Once this is understood, we venture  into generalized
symmetric spaces. Our conclusion contains a few more comments on
possible applications to more types of models and to $AdS$
backgrounds in string theory.

\section{Sigma models on coset superspaces $G/G^{\prime}$}
\label{Definitions}

The purpose of this section is to set the stage for our subsequent
investigation. We shall provide two different formulations for
non-linear sigma models on a right-coset superspace of the form $G/G'$. Here
$G$ is some supergroup with non-degenerate metric and $G'$ is a
sub-supergroup. For the moment, no further assumption is made
concerning the structure of $G'$. In later sections, fermionic
elements of $G'$ shall play a key role. There exist some tricks to
extend the validity of our analysis in case $G'$ does not contain
any such elements. We shall discuss these briefly in case $G' =
\{e\}$ is trivial. The examples in section 5 focus on models in
which $G' = G^{\mathbb{Z}_n}$ is invariant under some automorphism
of order $n=2$ or $n=4$. But for the general framework such
special features of $G'$ are irrelevant.

\subsection{General coset superspaces $G/G'$}

We want to consider non-linear sigma models on homogeneous
superspaces $G/G^{\prime}$, where the quotient is defined as the
set of right cosets of $G'$ in $G$ through the identification
\beq g\ \sim \ gh \ \ \mbox{ for all } \ \  h\ \in \ G^{\prime}\
\subset\  G\ \ . \eeq
Let $\mathfrak{g}$ be the Lie superalgebra associated to $G$. We
assume that $\mathfrak{g}$ comes equipped with a non-degenerate
invariant bilinear form $(\phantom{x},\phantom{x})$. Examples
include $\mathfrak{g} = \gl{m}{n}$, $\sgl{m}{n}$\footnote{We
exclude $\sgl{n}{n}$ and $\pgl{n}$, since it does not have a
non-degenerate metric}, $\psl{n}$  or $\osp{m}{2n}$. Similarly,
let $\mathfrak{g}^{\prime}$ be the Lie superalgebra associated to
$G^{\prime}$. We assume that the restriction of
$(\phantom{x},\phantom{x})$ to $\mathfrak{g}'$ is non-degenerate.
In this case, the orthogonal complement $\mathfrak{m}$ of
$\mathfrak{g}'$ in $\mathfrak{g}$ is a $\mathfrak{g}'$-module and
one can write the following $\mathfrak{g}'$-module decomposition
$\mathfrak{g}=\mathfrak{g}^{\prime}\oplus \mathfrak{m}$. In
particular, this means that there are projectors $P^{\prime}$ onto
$\mathfrak{g}^{\prime}$ and $P=\mathbbm{1}-P^{\prime}$ onto $\mathfrak{m}$
which commute with the action of $\mathfrak{g}'$.

With the above requirements, the quotient $G/G'$ can be endowed
with a $G$-invariant metric $\mathsf{g}$. This metric is by no
means unique and generally depends on some number of continuous
parameters which we shall also call radii. The square root of the
superdeterminant of $\mathsf{g}$ provides in the standard way a
$G$-invariant measure $\mu$ on $G/G'$. This measure is unique up
to a multiplicative constant which depends on the radii of the
metric $\mathsf{g}$. With these two structures one can already
write down a purely  kinetic Lagrangian for the sigma model on
$G/G'$ and quantize it in the path integral formalism. Inclusion
of $\theta$-terms,  WZW terms or $B$-fields requires a better
understanding of the geometry of the $G/G'$ superspace. In fact,
the $\theta$ and WZW terms are associated to $G$-invariant closed
but not exact 2- and 3-forms, respectively. $B$-fields, on the
other hand, are written in terms of $G$-invariant exact 2-forms.
Every such linearly independent form comes with its own coupling
constant. We shall only consider Lagrangians with a kinetic term
and a $B$-field. Let $\mathsf{b}$ be some general linear
combination of $G$-invariant exact 2-forms. Then the most general
Lagrangian we consider can be written in the form
\begin{equation}\label{eq:lag_origin}
 \mathcal{L} = \eta^{\mu\nu}\mathsf{g}
 (\partial_\mu,\partial_\nu)+\epsilon^{\mu\nu}\mathsf{b}(\partial_\mu,
\partial_\nu) \ ,
\end{equation}
where $\eta^{\mu\nu}$ is the constant world sheet metric,
$\epsilon^{\mu\nu}$ the antisymmetric tensor with
$\epsilon^{01}=1$. The Lagrangian is obviously evaluated on maps
from the worldsheet $\Sigma$ to the superspace $G/G'$ and to every
one of such maps one can associate a  vector field $\partial_\mu$
on $G/G'$, which appears in eq.~\eqref{eq:lag_origin} in a
coordinate free notation.

There is a different way to formulate the sigma model on $G/G'$,
which makes its coset nature manifest and allows to explicitly
construct the metric $\mathsf{g}$ and the $B$-field $\mathsf{b}$
in eq.~\eqref{eq:lag_origin}. For that purpose, instead of maps
from the worldsheet to the target space $G/G'$, we consider more
general maps $g:\Sigma\rightarrow G$ from the world sheet to the
Lie supergroup $G$. A basis set of  1-forms on $G$ which are
invariant under the global left $G$-action is provided by the so
called Maurer-Cartan forms
\beq \label{eq:currdef} J_{\mu}(x)\ =\ g^{-1}(x)\partial_{\mu}
g(x)\ . \eeq
Higher $G$-invariant tensors may be built out of the Maurer-Cartan
forms by taking tensor products. There is a subspace of such
tensors which are also invariant with respect to the \emph{local}
right $G'$-action. These may be specified by their values on the
coset superspace $G/G'$. We use this idea in order to build
explicitly the $G$-invariant tensors $\mathsf{g}$ and $\mathsf{b}$
that enter the Lagrangian~\eqref{eq:lag_origin}.

Under right $G'$-gauge transformations $g^{\prime}:\Sigma\mapsto
G^{\prime}$ the Maurer-Cartan forms $J_\mu$ transform as
\beq \label{rightaction} g(x)\ \mapsto\  g(x)g^{\prime}(x)\qquad
J_{\mu}(x)\ \mapsto\
(g^{\prime}(x))^{-1}J_{\mu}(x)g^{\prime}(x)+(g^{\prime}(x))^{-1}\partial_{\mu
} g^{\prime}(x)\ . \eeq
Since the projection $P$ on $\mathfrak{m}$ commutes with the
action of $\mathfrak{g}^{\prime}$, the projected forms $P(J_{\mu})$
transforms by conjugation with $g^\prime$. To build right
$G'$-gauge invariant 2-forms we introduce the
$\mathfrak{g}'$-intertwiners
\begin{equation}\label{eq:GBdef}
\tG\ \in\
\text{End}_{\mathfrak{g}^{\prime}}\left(\mathfrak{m}\circ\mathfrak{m},
\mathbb{C}\right) \ \ \ \mbox{ and } \ \ \  \tB\ \in\
\text{End}_{\mathfrak{g}^{\prime}}\left(\mathfrak{m}\wedge\mathfrak{m},
\mathbb{C}\right)
\end{equation}
from the symmetric,
respectively antisymmetric tensor product of $\mathfrak{m}$ with
itself to the trivial representation.
In terms of these intertwiners the Lagrangian~\eqref{eq:lag_origin} takes the
explicit form
\beq \label{generalLagrangian} \calL\ =\
\eta^{\mu\nu}\tG\left(P(J_{\mu}),P(J_{\nu})\right)+\epsilon^{
\mu\nu } \tB\left(P(J_{\mu}),P(J_{\nu})\right)\ . \eeq
The  choice of $\tG$ and $\tB$, subject to some reality
constraints, parametrizes the moduli space of the sigma model on
$G/G'$ with a kinetic term and a $B$-field only. Global left
$G$-invariance of the Lagrangian~\eqref{generalLagrangian} is
automatic since Maurer-Cartan forms $J_{\mu}(x)$ are left
$G$-invariant by construction. Right $G^{\prime}$-gauge
invariance, on the other hand, follows easily from the
transformation properties of $P(J_\mu)$ and the
def.~\eqref{eq:GBdef} of $\tG$ and $\tB$ as invariant bilinear
forms on the $\mathfrak{g}'$-module
$\mathfrak{m}\otimes\mathfrak{m}$.

\subsection{$G/G^{\mathbb{Z}_N}$ coset superspaces}

In the previous subsection we have described the most general
action with a kinetic term and a $B$-field for the $G$-invariant
sigma model with target space $G/G'$. The formulation includes
sigma models on symmetric spaces and certain generalizations that
appear in the context of AdS compactifications. In fact, for many
cases of interest, the Lie sub-superalgebra $\mathfrak{g}^{\prime}$ in
$\mathfrak{g}$ consists of elements that are invariant under some
finite order automorphism $\Omega:\mathfrak{g} \mapsto \mathfrak{g}$. An
automorphism of order $N$ defines a decomposition
\begin{equation} \label{eq:decOmeig}
\mathfrak{g}\ =\ \mathfrak{g}^{\prime}\oplus
\bigoplus_{i=1}^{N-1}\mathfrak{m}_i\qquad , \quad
\Omega|_{\mathfrak{g}^{\prime}}\ =\ \mathbbm{1}\quad , \quad
\Omega(\mathfrak{m}_k)\ =\ e^{\frac{2\pi i k}{N}}\mathfrak{m}_k \end{equation}
of the superalgebra $\mathfrak{g}$ into eigenspaces of $\Omega$.
Extending our previous notation, we denote by $P_i$ the projection
maps onto $\mathfrak{m}_i$. Thanks to the properties of the
$\Omega$, we find
\beq \label{properties1} \left[\mathfrak{m}_i,\mathfrak{m}_j\right]\
\subset\  \mathfrak{m}_{i+j\text{ mod } N}\qquad
\left(\mathfrak{m}_i,\mathfrak{m}_j\right)\ =\ 0\ \ \text{ if } \ \ i+j\
\neq \ 0 \text{ mod } N\ , \eeq
where we have set $\mathfrak{m}_0\equiv \mathfrak{g}^{\prime}$.
Consequently, the subalgebra $\mathfrak{g}^{\prime}$ acts on the
$\Omega$-eigenspaces $\mathfrak{m}_i$. Note that the spaces $\mathfrak{m}_i$
need
not be indecomposable under $\mathfrak{g}^{\prime}$ in which case the
decomposition into $\mathfrak{g}^\prime$-modules is finer than the
decomposition \eqref{eq:decOmeig} into eigenspaces of $\Omega$.

Whenever a coset superspaces $G/G'$ is defined by an automorphism
$\Omega$ of order $N$ we shall use the alternative notation
$G/G^{\mathbb{Z}_N}$. The cases when the grading induced by
$\Omega$ is compatible with the $\mathbb{Z}_2$ superalgebra
grading, that is $\mathfrak{m}_{2i}\in \mathfrak{g}_{\bar{0}}$ and
$\mathfrak{m}_{2i-1}\in \mathfrak{g}_{\bar{1}}$, were considered
by Kagan and Young in \cite{Kagan:2005wt}. They restricted to a
family of Lagrangians for which $\tG$ and $\tB$ take the following
special form
\beq \label{eq:KYGB} \tG(X,Y)\ =\
\sum_{i=1}^{N-1}p_i\left(P_i(X),P_{N-i}(Y)\right)\quad , \quad
\tB(X,Y)\ =\ \sum_{i=1}^{N-1}q_i\left(P_i(X),P_{N-i}(Y)\right)\ ,
\eeq
where  the $p_i$ and $q_i$ are constants obeying the additional
constraints
\beq p_i=p_{N-i}\qquad q_i=-q_{N-i}\ . \eeq
The forms of $\tG$ and $\tB$ in eq.~\eqref{eq:KYGB} do not give
rise to the most general Lagrangian for coset superspaces $G/G'$.
As an example consider the famous $\mathbb{Z}_4$ quotient
$\text{PSU}{(2,2|4)}/\BSO{1,4}\times\BSO{5}$.
Its metric has two radii because its bosonic base is
 $AdS_5\times S^5$.
On the other hand, the special form of $\tG$ in
eq.~\eqref{eq:KYGB} allows for only two parameters $p_1=p_3$ and
$p_2$, among which $p_1$ is redundant because of the purely
fermionic nature of $\mathfrak{m}_1$ and $\mathfrak{m}_3$. In this
example, the form that $\tG$ takes in eq.~\eqref{eq:KYGB}
restricts the radii of $AdS_5$ and $S^5$ to be equal.

The properties of the theory defined by eqs.\ \eqref{eq:KYGB}
certainly depend on the precise choice of the parameters $p_i$
and $q_i$. In particular, it was shown in \cite{Young:2005jv} and
\cite{Kagan:2005wt} that one loop conformal invariance requires
\beq p_i\ =\ 1\qquad q_i\ =\ 1-\frac{2i}{N}\qquad \text{ for } i\
\neq\  0\ , \eeq
for all even $N$. We believe, however, that in most cases these
conditions are not sufficient to guarantee the vanishing of the
full beta function.

Our second comment concerns the treatment of coset superspaces $G/G'$
in which the denominator group $G'$ has a non-trivial centralizer
$Z \subset G$. For such coset superspaces, there exists a
residual symmetry by right multiplications with elements of $Z$.
In an equivalent formulation one can make all symmetries of $G/G'$ to act from
the left.
For that we rewrite $G/G' = G \times Z/ G' \times Z$ where the
factor $Z$ in the denominator is embedded diagonally into the
numerator.
To make the associated reformulation of the sigma model
a bit more explicit, we focus on the principal chiral model for
the supergroup $U$. Without any further thought one might be
tempted to describe this model through $G = U$ and $G' = \{ e\}$.
But as our introductory comments suggest, we prefer to rewrite the
group manifold $U$ as a coset superspace $U = U \times U/U$ and hence to set
\beq G\ =\ \left\{\left(x,y\right): x,y \in  U\right\}\qquad ,
\qquad G^{\prime}\ =\ \left\{\left(x,x\right): x\in U\right\}\ .
\eeq
The left and right action of $G$ on itself is given by
componentwise multiplication. The right coset superspace
$G/G^{\prime}\cong U$ is considered as the space of equivalence
classes under the equivalence relation $\left(x,y\right)\sim
\left(xz,yz\right)$, for all $z\in U$. In particular,
$\left(xy^{-1},1\right)$ is the canonical representative of the
equivalence class of $\left(x,y\right)$. Hence, the currents
$J_{\mu}$ and the projection map $P:\mathfrak{g}\rightarrow \mathfrak{m}$
are given by
\beq J_{\mu}\ =\
\left(x^{-1}\partial_{\mu}x,y^{-1}\partial_{\mu}y\right)\qquad ,
\qquad  P:\left(v,w\right)\ \mapsto\
\left(\frac{v-w}{2},-\frac{v-w}{2}\right)\ . \eeq
If $(\phantom{x},\phantom{x})$ is the invariant form on the Lie superalgebra
of $U$ and we take $\tG$ to be given by
\beq \tG\left(\left(v_1,w_1\right)\circ
\left(v_2,w_2\right)\right)\ = \ \left(v_1,v_2\right)+
\left(w_1,w_2\right)\  \eeq
we obtain the usual principal chiral model for $U$. In fact, one
may easily show that
$$\tG\left(P\left(J_{\mu}\right),P\left(J_{\nu}\right)\right)\eta^{\mu\nu}
\ =\ \frac{1}{2}\left(u^{-1}\partial_{\mu} u, u^{-1}\partial_{\nu}
u\right)\eta^{\mu\nu}\ \ , $$where $u=xy^{-1}\in U$. Thereby we
have established the standard geometric results that allows us to
treat the principal chiral model on $U$ as a $G/G'$ coset
superspace model. The advantage of the seemingly more complicated
coset description will become apparent below.

\subsection{Observables and correlators}
\label{sec:observables}

We give a brief description of observables
and their correlation functions. Let us denote by ${\mathcal G}$
and ${\mathcal G}^\prime$ the space of all continuous maps from
the world-sheet $\Sigma$ to the supergroups $G$ and $G'$,
respectively. Obviously, $\mathcal{G}'$ acts on $\mathcal{G}$ by
point-wise (on $\Sigma$) right multiplication. Local observables
of the $G/G'$ quotient model are defined as some well behaved
class of maps $\mathcal{O}:\mathcal{G}\times \Sigma\mapsto
\mathbb{C}$ invariant by this right $\mathcal{G}'$ action
\beq \mathcal{F}_{G/G^{\prime}}\ =\ \left\{\ {\cal{O}}:
\mathcal{G}\times \Sigma \mapsto \mathbb{C} \ | \  \mathcal{O}(g,x) \ = \
{\cal{O}} (g\cdot g^{\prime},x) \ \mbox{for all} \  g^{\prime} \in
\mathcal {G}^{\prime} \right\} \ , \eeq
where we have denoted $\mathcal{O}(g,x):= \mathcal{O}(g(x))$.

One class of observables is obtained by restricting smooth right
$G'$-invariant functions $f:G\mapsto \mathbb{C}$ to the image of
an arbitrary map $g:\Sigma\mapsto G$. Existence of the 2-point
function for this observable $f(g(x))$ requires that $f\in
L_2(G/G')$. These are the tachyonic fields.

Similarly, all other observables can be obtained from smooth right
$G'$-invariant tensor forms $t$ of rank $k$ on $G$ by restricting
them to the image of some arbitrary map $g:\Sigma\mapsto G$ and
evaluating them on the set of vector fields
$\partial_{\mu_1},\dots,\partial_{\mu_k}$. Existence of
correlation functions for the observables
$t_{g(x)}(\partial_{\mu_1},\dots,\partial_{\mu_k})$ imposes some
further constraints. As an example, let us consider the
Maurer-Cartan forms $J_{\mu}$ we have introduced in
eq.~\eqref{eq:currdef}. Their components do not give rise to
observables of the quotient model because there are not right
$G'$-gauge invariant. Nevertheless, recalling their
behavior~\eqref{rightaction} under right $G'$-gauge
transformations, one can build the following observables
$$ j_\mu \ = \    g P(J_\mu) g^{-1}\ \in \ \mathcal{F}_{G/G'}\ . $$
These are the Noether currents for the global symmetry $G$ of the $G/G'$ sigma model.

In the following we shall denote by $\mathcal{O} (x)$ the
restriction of the local observable $\calO$ to the point $x$ of the world-sheet. Given any set $\calO_i\in
{\cal F}_{G/G^{\prime}}$ of such local observables we define their
unnormalized correlation functions through
\beq \label{eq:def_corr_func}
\vac{\prod_{i=1}^N\calO_i(x_i)}_{G/G^{\prime}}=\int_{
\mathcal{G}} \, [\text{d}\mu_G] \ e^{-S}\
\prod_{i=1}^N\calO_i(x_i)\ . \eeq
Here, $S=\int_\Sigma d^2 x\, \mathcal{L}$ is the action
~\eqref{generalLagrangian} of our model. Our definition of
correlation functions involves an integration over elements of
${\cal G}$ with some left ${\cal G}$-invariant measure
\begin{equation}
 [d \mu_G(g)] = \prod_{x\in\Sigma}d\mu_G(g(x))\ ,
\end{equation}
where $d\mu_G$ is the unique (up to normalization) Haar measure on
$G$. In eq.~\eqref{eq:def_corr_func}, the integration over $G$ at
every point of the worldsheet yields a factor which is the volume
of $G'$. Strictly speaking, this makes sense only if $G'$ is
compact. We assume that the contribution of such factors can be
properly regularized and renormalized by replacing the worldsheet
$\Sigma$ with a lattice,  and shall not dwell on such details.

The reader might be curious about why we insist on integrating
over maps $\mathcal{G}$ from the worldsheet to the group $G$
rather then maps from the worldsheet to the quotient $G/G'$. In
other words, why we do not fix the right $G'$-gauge invariance? As
we shall see, keeping this symmetry explicit in the quantum theory
simplifies the cohomology calculations on tensor fields.

\section{Cohomological reduction in representation theory}

The following section contains most of the mathematical results we
shall need below. Since several of our statements seem to be new,
we decided to present and prove them in a rather mathematical
style. For pedagogical reasons, however,  we shall begin with a
short overview of the most relevant notations and results. This
should enable impatient readers to skip over subsections 3.2 -- 3.6,
at least upon first reading.

\def\g{\mathfrak{g}}
\def\h{\mathfrak{h}}
\def\e{\mathfrak{e}}
\def\f{\mathfrak{f}}
\def\m{\mathfrak{m}}
\newcommand\CPone{\ensuremath{\text{$\mathbb{CP}$}^{1|2}
}}

\subsection{Overview over results}\label{sec:into_math}

As in the previous subsection we assume $\g$ to be a Lie
superalgebra with a non-degenerate symmetric bilinear form
$(\phantom{x},\phantom{x})$. Let us pick some fermionic element $Q \in \g$
that squares to zero, i.e.\ $[Q,Q] = 2 Q^2 = 0$. Such elements
exist for most Lie superalgebras of interest, with the exception of the series
osp(1$|$2N). The element $Q$ defines a decomposition of $\g$ into
three Lie sub-superalgebras $\h,\e,\f$,
\begin{eqnarray*}
\g & = & \h \oplus \e \oplus \f \ \ \quad \quad \mbox{ such that} \\[2mm]
\e & = & \Img_Q \g \ \ \mbox{ and } \ \ \h \oplus \e  =  \Ker_Q \g
\ \ .
\end{eqnarray*}
The bilinear form $(\phantom{x},\phantom{x})$ restricts to a non-degenerate
form on $\h \subset \g$. The Lie sub-superalgebras $\e$ and $\f$,
on the other hand, are isotropic, i.e.\ $ (\e , \e) = 0 = (\f ,
\f)$. We also note that $\e$ and $\f$ both carry an action of
the Lie superalgebra $\h$.

In subsection 3.2 we shall compute the Lie superalgebra $\h$ for
various choices of $\g$ and any $Q \in \g$. The results may be
summarized as follows
\begin{eqnarray}
 \h(\gl{M}{N}) & \simeq & \gl{M-r_Q}{N-r_Q}\ ,  \\[2mm]
  \h(\sgl{M}{N}) & \simeq & \sgl{M-r_Q}{N-r_Q}\ ,
  \\[2mm]
\h(\osp{R}{2N}) & \simeq &  \osp{R-2 r_Q}{2N- 2 r_Q}\ .
\end{eqnarray}
The answer depends on $Q$ only through an integer $\rank{(Q)} =
r_Q \geq 1$ that will be defined in section 3.2. In all three
cases we listed above, there exist elements $Q$ with minimal rank
$r_Q =1$.

The element $Q$ acts in any representation $V$ of $\g$ and defines
the following cohomology classes
$$ \Coh_Q(V) \ =\ \Ker_Q V / \Img_Q V \ \ . $$
The linear space $\Coh_Q(V)$ comes equipped with an action of the
Lie sub-superalgebra $\h \subset \g$. It is not difficult to see
(cp. section 3.3) that $V \rightarrow \Coh_Q(V)$ is functorial,
i.e.\ it is consistent with forming tensor products, direct sums
and conjugation in the category of $\h$-representations.

Though $\Coh_Q(V)$ vanishes for a large class of representations
(see below), it can certainly contain non-trivial elements. Note,
for example, that the cohomology of the adjoint $\g$-module $V =
\g$ is given by $\Coh_Q(\g) = \h$. One may actually show that $V$
and $\Coh_Q(V)$ possess the same super-dimension. Hence, all
representations $V$ with non-vanishing super-dimension $\sdim V$ =
$\dim V_{\bar 0}$ - $\dim V_{\bar 1}$ must give rise to $\Coh_Q(V)
\neq 0$. The condition $\sdim V \neq 0$ is often satisfied for
short multiplets (atypical representations). For long (typical)
multiplets $V$, on the other hand, the cohomology $\Coh_Q(V)$ is
always trivial. More generally, we will see that $\Coh_Q(V) = 0$
for all (finite dimensional) projective modules.

Let us now consider a Lie superalgebra $\g$ along with a
subalgebra $\g' \subset \g$. The corresponding Lie supergroups
will be denoted by $G$ and $G'$, respectively. As before, we want
to pick some fermionic element $Q \in \g$ with $Q^2 =0$. Let us
now assume that $Q$ is contained in the subalgebra $\g' \subset
\g$ so that its cohomology defines two Lie sub-superalgebras $\h
\subset \g$ and $\h' \subset \g'$ with $\h' \subset \h$. We denote
the associated Lie supergroups by $H$ and $H'$, respectively. Note
that the space of functions on the coset superspace $G/G'$ carries an
action of $\g$. In particular, the element $Q$ acts and gives rise
to some cohomology. The central claim of this section is that
the cohomology of some geometric object (smooth function, tensor form, square integrable function) defined on the coset superspace $G/G'$ is equivalent to a similar object defined on $H/H'$.
This gives rise to isomorphisms of the type
\begin{equation} \label{eq:mainresult}
\Coh_Q(L_2(G/G')) \ \cong \ L_2(H/H')\ \ ,
\end{equation}
which means that the cohomology of $Q$ in the space of square
integrable functions on $G/G'$ may be interpreted as a space of
square integrable functions on the coset superspace $H/H'$. We
note that $L_2(H/H')$ carries an action of the Lie superalgebra
$\h = \Coh_Q(\g) \subset \g$. The isomorphism
\eqref{eq:mainresult} is an isomorphism of $\h$ modules.

The derivation of eq.\ \eqref{eq:mainresult} is a bit involved. We
shall provide a fully explicit proof in section 3.6. Here, we
shall content ourselves with some more qualitative arguments.  By
construction, $\Coh_Q(L_2(G/G'))$ is a commutative algebra and
hence it can be considered as an algebra of functions on some
space $X$. The latter is acted upon by the supergroup $H$ with Lie
superalgebra $\Coh_Q(\g) = \h$. Since the action of $G$ on $G/G'$
is transitive, it suffices to understand the reduction from $G/G'$
to $X$ locally, near the image $eG'\in G/G'$ of the group unit
$e\in G$. The tangent space at this point of the coset supermanifold is given by
$\g/\g' \equiv \m$. Its cohomology is given by
\begin{equation}
\Coh_Q(\m) \ = \ \Coh_Q(\g/\g') \ = \ \Coh_Q(\g)/\Coh_Q(\g') \ = \
\h/\h'\ ,
\end{equation}
i.e.\ the tangent vectors to the reduced space $X$ lie in
$\h/\h'$. Thereby we conclude that $X = H/H'$. Now, let $\langle
\phantom{x},\phantom{x}\rangle_{G/G'}$ be the $G$-invariant scalar
product of $L_2(G/G')$. It is very easy to see that $\langle
\phantom{x},\phantom{x}\rangle_{G/G'}$ descends to cohomology.
Hence, the space $\Coh_Q(L_2(G/G'))$ of functions inherits an
$L_2$ structure from $L_2(G/G')$. We shall denote it by
$\langle\phantom{x},\phantom{x}\rangle_{H/H'}$. Its $H$-invariance
follows immediately from the $G$-invariance of $\langle
\phantom{x},\phantom{x}\rangle_{G/G'}$ and the inclusion
$\mathfrak{h}\subset\Ker_Q\mathfrak{g}$. General results on
measure theory \cite{measure} then imply that the scalar product
$\langle\phantom{x},\phantom{x}\rangle_{H/H'}$ arises from a
measure on $H/H'$, which is unique (up to a constant factor) by
$H$-invariance. Hence, we have established
eq.~\eqref{eq:mainresult}.

As an example of the above, let us discuss the Lie superalgebra
$\g$ = gl(2$|$2). For $Q$ we pick the supermatrix that contains a
single entry in the upper right corner. It is then easy to check
that
$$
\Ker_Q\g \ = \ \h \oplus \e \ \ni \ \left( \begin{array}{cccc}
a_{11} &
a_{12} & b_{11} & b_{12} \\ 0 & a_{22} & b_{21} & b_{22} \\
0 & c_{12} & d_{11} & d_{12} \\
0 & 0 & 0 & a_{11} \end{array} \right) \ , \ \  \Img_Q \g \ = \ \e
\ \ni \ \left( \begin{array}{cccc} a_{11} &
a_{12} & b_{11} & b_{12} \\ 0 & 0 & 0 & b_{22} \\
0 & 0 & 0 & d_{12} \\
0 & 0 & 0 & a_{11} \end{array} \right)\ .
$$
Consequently,  $\Coh_Q(\g) = \h = $ gl(1$|$1) consists of all
supermatrices in which $a_{22}, b_{21}, c_{12}$ and $d_{11}$ are
the only non-vanishing entries. Let us also specify the Lie
sub-superalgebra $\g'$ to consist of all elements in $\g$ with
vanishing entries $b_{11} = b_{21} = d_{12} = d_{21} = c_{11} =
c_{12} = 0$. Hence, $\g' \cong$ gl(2$|$1) $\times$ gl(1). The
cohomology $\Coh_Q(\g') = \h' =$ gl(1) $\times$ gl(1) of $\g'$ can
be read off easily.

In our example, the quotient  $G/G'$ is the complex projective
superspace $\CPone  \cong S^2 \times \mathbb{R}^{0|4}$. Functions
thereon may be decomposed into finite dimensional representations
of psl(2$|$2) as follows
$$ L_2(\CPone) \ \cong \ \bigoplus_{j=0}^{\infty}\ \ [j,0]\ \ .  $$
The representations $[j,0]$ of psl(2$|$2) that appear in this
decomposition possess dimension $d_j = 16(2j+1)$. They are
generated from the spherical harmonics on the bosonic 2-sphere by
application of four fermionic generators. For $j \neq 0$, the
psl(2$|$2) modules $[j,0]$ turn out to be projective (typical long
multiplets) and hence $\Coh_Q([j,0]) = 0$ for all $j \neq 0$. The
only non-vanishing cohomology comes from the 16-dimensional Kac
module $[0,0]$. The latter is built from three atypical
irreducibles, namely two copies of the trivial representation and
one copy of the 14-dimensional adjoint representation of
psl(2$|$2). Each of these pieces contributes to cohomology. While
the two trivial representations give rise to two even states, the
adjoint representation has an excess of two odd states which
descend to cohomology. In total, we obtain a 4-dimensional
cohomology
$$ \Coh_Q(L_2(G/G')) \ = \ \Coh_Q(L_2(\CPone)) \ = \ \Coh_Q([0,0])
\ = \ \mathbb{R}^{2|2} \ \ . $$ To me more precise, we note that
the linear space $\mathbb{R}^{2|2}$ carries the 4-dimensional
projective cover of gl(1$|$1). According to our general statement,
the cohomology should agree with the space of functions on the
quotient $H/H' =$ GL(1$|$1)/GL(1) $\times$ GL(1). The quotient
possesses two fermionic coordinates and hence gives rise to a
4-dimensional algebra of functions over it,
$$ L_2(H/H') \ = \ \mathbb{R}^{2|2} \ \ . $$
It indeed agrees with the cohomology in the space of functions
over $\CPone$, as it was claimed in eq.\ \eqref{eq:mainresult}.

\subsection{Reduction of Lie superalgebras}
\label{sec:cr_lsa}

%
%
As in the previous subsection, let $\mathfrak{g}$ denote a Lie
superalgebra and $Q$ be any fermionic element of $\mathfrak{g}$
with vanishing bracket, that is $[Q,Q]=2Q^2=0$.

\begin{lemma} The element $Q \in \g$ gives rise to a linear map
$Q: \g \rightarrow \g$ that is defined by $Q(X) = [Q,X]$ for all
$X \in \g$. Then it is possible to show that
\begin{enumerate}
  \item[1)] the subspaces $\Ker_Q\mathfrak{g}$ and $\Img_Q \mathfrak{g}$
  are subalgebras of $\mathfrak{g}$,
  \item[2)] the subalgebra $\Img_Q \mathfrak{g}$ is an ideal of $\Ker_Q \mathfrak{g}$,
  \item[3)] the quotient space $\Coh_Q(\mathfrak{g})$ is a Lie superalgebra.
\end{enumerate}
\end{lemma}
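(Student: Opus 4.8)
The plan is to verify the three claims in order, since each builds on the previous. Throughout I write $Q(X)=[Q,X]$ and I use the graded Jacobi identity in the form $[Q,[X,Y]] = [[Q,X],Y] + (-1)^{|X|}[X,[Q,Y]]$, which is valid because $Q$ is odd. The single fact doing all the work is that $Q$ acts as an odd derivation of $\g$ with $Q^2 = \frac12[Q,[Q,\,\cdot\,]] = 0$ (the last equality is itself an instance of the Jacobi identity together with $[Q,Q]=0$). So the content is really just ``the cohomology of a square-zero odd derivation of a superalgebra is again a superalgebra,'' specialized to an inner derivation.

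\emph{Claim 1.} For $\Ker_Q\g$: if $X,Y\in\Ker_Q\g$ then the derivation property gives $Q([X,Y]) = [Q(X),Y] \pm [X,Q(Y)] = 0$, so $[X,Y]\in\Ker_Q\g$. For $\Img_Q\g$: if $X = Q(X')$ and $Y=Q(Y')$, I want $[X,Y]\in\Img_Q\g$. Apply the derivation property to $Q([X',Y])$: since $Q(X') = X$ and I can arrange $Y\in\Img_Q\g\subset\Ker_Q\g$ (elements of the image are closed, as $Q^2=0$), one gets $[X,Y] = \pm Q([X',Y])$ up to sign bookkeeping, hence $[X,Y]\in\Img_Q\g$. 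The only mild care needed is the sign $(-1)^{|X'|}$, but it is irrelevant for membership in a subspace.

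\emph{Claim 2.} $\Img_Q\g$ is an ideal of $\Ker_Q\g$: take $X\in\Ker_Q\g$ and $Y=Q(Y')\in\Img_Q\g$. Then $Q([X,Y']) = [Q(X),Y'] \pm [X,Q(Y')] = 0 \pm [X,Y]$, because $Q(X)=0$. Hence $[X,Y] = \pm Q([X,Y'])\in\Img_Q\g$. This is the cleanest of the three.

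\emph{Claim 3.} Given Claims 1 and 2, $\Coh_Q(\g) = \Ker_Q\g/\Img_Q\g$ is a quotient of a Lie superalgebra by an ideal, so it inherits a well-defined bracket $[\bar X,\bar Y] := \overline{[X,Y]}$; well-definedness is exactly the ideal property, and graded antisymmetry and the graded Jacobi identity pass to the quotient automatically. I would also remark that the $\mathbb{Z}_2$-grading descends because $Q$ is homogeneous (odd), so $\Ker_Q\g$ and $\Img_Q\g$ are graded subspaces. I do not foresee a real obstacle here; the only thing to be careful about is keeping the Koszul signs straight when invoking the Jacobi identity, and to note explicitly at the outset that $Q$ being odd is what makes $Q$ a (graded) derivation in the first place — for an even $Q$ with $[Q,Q]=0$ the statement would be vacuous anyway, but the sign conventions differ.
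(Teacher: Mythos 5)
Your proof is correct and follows essentially the same route as the paper, which simply asserts that all three claims follow from the graded Jacobi identity; you have merely spelled out the derivation-property computations that the paper leaves implicit. The only nitpick is notational: as an operator, $Q^2(X)=[Q,[Q,X]]=\tfrac12[[Q,Q],X]$, so the factor $\tfrac12$ belongs with $[[Q,Q],\cdot\,]$ rather than with $[Q,[Q,\cdot\,]]$, but this does not affect the argument.
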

All assertions of this lemma are easily established using no more
that the (graded) Jacobi identity. The Lie bracket on the quotient
space  $\Coh_Q(\mathfrak{g})$ is induced from the Lie bracket of
$\mathfrak{g}$ through
\begin{equation}
  [x+\Img_Q \mathfrak{g},y+\Img_Q \mathfrak{g}] \ = \ [x,y]+\Img_Q
  \mathfrak{g},\qquad x,y\in \Ker_Q\mathfrak{g} \ .
\end{equation}
We shall often refer to the space $\Coh_Q(\mathfrak{g})$ as the
{\em cohomological reduction} of the Lie superalgebra
$\mathfrak{g}$ with respect to $Q$. In our discussion of concrete
examples we shall essentially restrict to the superalgebras
$\mathfrak{g}$ of the type $\osp{M}{2N}$, $\gl{M}{N}$ or
$\sgl{M}{N}, N \neq M$. All these Lie superalgebras possess an
invariant, consistent, supersymmetric, non-degenerate bilinear
form $(\phantom{x},\phantom{x}):
\mathfrak{g}\times\mathfrak{g}\rightarrow \mathbb{C}$.

The adjoint action of $Q$ can be brought in its Jordan normal form by choosing a
basis $\left\{h_a\right\}\cup\left\{e_i,
f_i\right\}$ of $\mathfrak{g}$ such that
\beq \label{eq:start_mark} \left[Q,h_a\right]\ =\ 0 \quad \text{
and } \quad \left[Q,f_i\right]\ =\ e_i\ . \eeq
Using the invariance of the bilinear form we show that
\begin{equation}\label{eq:ort1}
 (h_a,e_i)\ =\ 0\ ,\qquad
 (e_i,e_j)\ =\ 0\ .
\end{equation}
If follows from the non-degeneracy of the bilinear form that the
matrix $D_{ij}=\left(e_i,f_j\right)$ must be invertible. Defining
\begin{align}
h_a^{\prime}\ =\
h_a-\left(h_a,f_j\right)\left(D^{-1}\right)^{ji}e_i\ ,
\\[2mm]
f_i^{\prime}\ =\ f_i -
\frac{1}{2}(f_i,f_j)\left(D^{-1}\right)^{jk}e_k
\end{align}
we see that
\begin{equation}\label{eq:ort2}
  \left(h_a^{\prime}, f'_i\right)\ =\ 0\ ,\qquad
  \left(f_i^{\prime}, f'_j\right)\ =\ 0\ \ .
\end{equation}
To prove the second assertion in eq.\ \eqref{eq:ort2} we have used
the following property of the matrix $D$
\begin{equation}
  D_{ij} \, =\,  \left([Q,f_i],f_j\right)\, =\,
  -(-1)^{|f_i|}\left(f_i,[Q,f_j]\right)\, =\,
  -(-1)^{|f_i|}\left(f_i,e_j\right) \, =\,  - D_{ji},
\end{equation}
where the last equality in the chain uses the consistency of the
bilinear form.

Let us denote by $\mathfrak{h}$, $\mathfrak{e}$ and $\mathfrak{f}$
the span of $h_a^\prime$, $e_i$ and $f_i^\prime$, respectively.
Notice that $Q$ still remains in a Jordan normal form with respect
to the new basis $h_a^\prime, e_i, f_j^\prime$. From the
eqs.~(\ref{eq:ort1}, \ref{eq:ort2}) we deduce the following
orthogonality conditions
\begin{equation}\label{eq:ort_all}
  (\mathfrak{h},\mathfrak{e}) \, =\,  (\mathfrak{h},\mathfrak{f})\,
  =\,
(\mathfrak{e},\mathfrak{e}) \, =\, (\mathfrak{f},\mathfrak{f})\,
=\, 0\ .
\end{equation}
Using once more the invariance of the bilinear form it is not hard
to derive the following features of the Lie bracket on
$\mathfrak{g}$,
    \begin{align}\label{eq:1_fam_comm}
     [\mathfrak{h},\mathfrak{h}]\, \subset\,  \mathfrak{h},&&
     [\mathfrak{h},\mathfrak{e}]\, \subset\, \mathfrak{e},&&
     [\mathfrak{h},\mathfrak{f}]\, \subset\,  \mathfrak{f},\\[2mm] \notag
     [\mathfrak{e},\mathfrak{e}]\, \subset\, \mathfrak{e},&&
     [\mathfrak{f},\mathfrak{f}]\, \subset\, \mathfrak{f},&&
     [\mathfrak{e},\mathfrak{f}]\, \subset\, \mathfrak{g} \ .
    \end{align}
Notice, in particular, that both $\mathfrak{e}$ and $\mathfrak{f}$
provide some representation for the Lie superalgebra
$\mathfrak{h}$. Furthermore, we observe that $\g$ and $\h$ possess
the same cohomology, $\Coh_Q (\mathfrak{g}) =
\Coh_Q(\mathfrak{h})$. Next, let us define the projection map
$p_{\mathfrak{h}}:\mathfrak{g}\rightarrow \mathfrak{h}$ through
\begin{equation}
  p_{\mathfrak{h}}(x)\  = \ (x,h^\prime_a)(B^{-1})^{ab}h'_b\ ,
\end{equation}
where $x\in\Ker_Q$ and $B_{ab}=(h^\prime_a,h^\prime_b)$.
The kernel of $p_{\mathfrak{h}}$ being exactly $\mathfrak{e}$,
the map $p{_\mathfrak{h}}$ is effectively defined on $\Coh_Q(\mathfrak{g})$.
Taking into account eqs.~\eqref{eq:1_fam_comm}, we see that
$p_{\mathfrak{h}}$ provides the following algebra isomorphism
\begin{equation}\label{eq:coh_isom}
  \mathfrak{h} \ \simeq\  \Coh_Q (\mathfrak{g})\ .
\end{equation}
%
In the same spirit, one can define $\mathfrak{h}$-module
projection homomorphisms $p_{\mathfrak{e}}$
and $p_{\mathfrak{f}}$ from $\mathfrak{g}$ to
$\mathfrak{e}$ and $\mathfrak{f}$, respectively,
\begin{align}\label{eq:pr_drd_alg}
  p_{\mathfrak{e}}(x) &=\, (x,f'_i) (D^{-1})^{ij}e_j\\[1mm] \notag
  p_{\mathfrak{f}}(x) &=\,  x - p_{\mathfrak{h}}(x)- p_{\mathfrak{e}}(x) \ .
\end{align}
These provide the following direct sum decomposition of
$\mathfrak{g}$,
\begin{equation}\label{eq:dir_sum_dec_coh_alg}
   \mathfrak{g}\simeq \mathfrak{h}\oplus\mathfrak{e}\oplus\mathfrak{f} \ .
\end{equation}
The isomorphism respects the action of $\h$, i.e.\ it is an
isomorphism of $\h$ modules.
\medskip

The superalgebras we consider are characterized by a Cartan
subalgebra which we denote, in a somewhat non-standard way, by
$\mathfrak{g}_0$ and a root system $\Delta$. If
$R:\mathfrak{g}\rightarrow \bgl{V}$ is the fundamental
representation, then the Cartan subalgebra $\mathfrak{g}_0$ can be
represented through diagonal matrices of $\bgl{V}$, while $\Delta$
is a subset of the root system of $\bgl{V}$.

Let us now perform the cohomological reduction for the Lie
superalgebra $\mathfrak{g}$ when $Q$ is a root generator of root
$q$ such that $2q\notin \Delta$. Consider
the root decomposition of $\mathfrak{g}$
\begin{equation}\label{eq:root_decomp}
  \mathfrak{g} \ = \ \mathfrak{g}_0 \oplus
  \bigoplus_{\alpha\in\Delta}\ \mathfrak{g}_\alpha \ .
\end{equation}
The superalgebras $\mathfrak{e}$ and $\mathfrak{f}$
can be easily evaluated
\begin{align}
  \mathfrak{e} &= \ \mathbb{C}H_q \oplus \bigoplus_{\alpha-q\in
  \Delta} \mathfrak{g}_{\alpha}\ ,\\[2mm]
  \mathfrak{f} &=\  \mathfrak{g}_0/\Ker q \oplus \bigoplus_{\alpha+q\in
  \Delta} \mathfrak{g}_{\alpha} \ ,
\end{align}
where for any weight $\lambda$ one denotes by $H_\lambda$ the
Cartan generator constructed through
\begin{equation}\label{eq:useful_trick}
  \lambda(H)\ =\ (H_\lambda,H)\ .
\end{equation}
Therefore, we can write the cohomology of $\g$ in the form
\begin{equation}\label{eq:one_rank_case}
 \Coh_q(\mathfrak{g})\, :=\, \Coh_Q(\mathfrak{g})\, \simeq\, \mathfrak{h}\, =\,  \Ker
q/\mathbb{C}H_q
 \oplus \bigoplus_{ \alpha\pm q\notin \Delta}
 \mathfrak{g}_\alpha \ .
\end{equation}
Let us apply this general result to compute the cohomological
reduction of the superalgebras gl, sl and osp. For the readers
convenience we have listed the relevant root systems in
tab.~\ref{tab:root_systems}.

\begin{table}[t]\label{tab:root_systems}
\beq
\begin{array}{lcc}
\mathfrak{g} & \Delta_{\bar{0}} & \Delta_{\bar{1}}\vspace{2pt}\\\hline\\
\gl{M}{N},\ \sgl{M}{N} & \begin{array}{c}\epsilon_i-\epsilon_j\\\delta_k-\delta_l\end{array} &
\pm\epsilon_i\mp \delta_k\vspace{6pt}\\
\osp{2M}{2N} &  \begin{array}{c}\pm \epsilon_i\pm \epsilon_j\\
\pm\delta_k\pm\delta_l\\ \pm 2 \delta_k\end{array} & \pm \epsilon_i\pm
\delta_k\vspace{6pt}\\\osp{2M+1}{2N} &  \begin{array}{c}\pm \epsilon_i\pm
\epsilon_j\\ \pm \epsilon_i\\
\pm\delta_k\pm\delta_l\\ \pm 2 \delta_k\end{array} &
\begin{array}{c}\pm \epsilon_i\pm
\delta_k\\\pm \delta_k\end{array}
\end{array}
\eeq
\caption{The root systems of gl, sl and osp type superalgebras
in the standard basis $\epsilon_1,\dots,\epsilon_M, \delta_1,\dots,\delta_M$.
See for instance \cite{Frappat:1996pb} for more details.}
\end{table}

Consider the superalgebra $\gl{M}{N}$ first. Let $Q$ be a root
generator for the root $q=\epsilon_r-\delta_s$. The requirement
$\alpha\pm q \notin \Delta$ is satisfied for the following roots
\begin{equation}\label{eq:roots_anal_gl}
   \epsilon_i-\epsilon_j,\,
  \epsilon_i-\delta_k,\quad \mbox{ with } \quad
  \delta_k-\delta_l,\quad i,j\neq r,\,
  k,l\neq s\ \ .
\end{equation}
These give rise to the root system of a $\gl{M-1}{N-1}$
subalgebra. As a basis of the Cartan subalgebra one may choose the
Cartan generators $H_{\epsilon_i}$, $H_{\delta_k}$ which are
defined through eq.~\eqref{eq:useful_trick}. Evaluating
\begin{equation}\label{eq:cartan_algebra_reduction}
  \Ker (\epsilon_r-\delta_s)/\mathbb{C}H_{\epsilon_r-\delta_s}\ =\
 \Ker (\epsilon_r-\delta_s)\cap \Ker(\epsilon_r+\delta_s) \ =\
  \Ker \epsilon_r\cap \Ker \delta_s
\end{equation}
we deduce with the help of eq.~\eqref{eq:one_rank_case} that
\begin{equation}\label{eq:coh1}
  \Coh_{\epsilon_r-\delta_s}(\gl{M}{N})\ \simeq\  \gl{M-1}{N-1}  \ .
\end{equation}
The cohomological reduction of $\sgl{M}{N}$ is only slightly
different. As the roots of $\sgl{M}{N}$ and $\gl{M}{N}$ are the
same, the analysis~\eqref{eq:roots_anal_gl} remains unchanged. The
Cartan algebra of $\sgl{M}{N}$ can be viewed as the subalgebra of
the Cartan algebra of $\gl{M}{N}$ defined by $\Ker \str$, where we
have introduced the supertrace $\str:=\sum \epsilon_i -\sum
\delta_k$ . Therefore, eq.~\eqref{eq:cartan_algebra_reduction} has
to be replaced by
\begin{align*}
  \Ker \str\cap\Ker (\epsilon_r-\delta_s)/\mathbb{C}H_{\epsilon_r-\delta_s}
  &=\
  \Ker \str \cap\Ker (\epsilon_r-\delta_s)\cap \Ker
  (\epsilon_r+\delta_s)\\[2mm]
  {}&=\ \Ker(\sum_{i\neq r} \epsilon_i -\sum_{k\neq s} \delta_k)\cap
  \Ker \epsilon_r\cap \Ker \delta_s \ ,
\end{align*}
which leads to the Cartan subalgebra of $\sgl{M-1}{N-1}$.
Therefore we obtain
\begin{equation}\label{eq:coh2}
  \Coh_{\epsilon_r-\delta_s}(\sgl{M}{N})\ \simeq\  \sgl{M-1}{N-1}  \ .
\end{equation}
A similar analysis may be performed for osp type superalgebras. If
we choose $q=\epsilon_r \pm \delta_s$ then $\alpha\pm q$ is not a
root for all $\alpha$ from the following list
\begin{equation}\label{eq:red_d_roots}
  \pm \epsilon_i \pm \epsilon_j,\quad
  \pm \epsilon_i\pm\delta_k,\quad
  \pm\delta_k\pm \delta_l,\qquad
  i\neq j,\quad i,j\neq r,\quad k,l\neq s \ ,
\end{equation}
in the case of $\osp{2M}{2N}$ and
\begin{equation}\label{eq:red_b_roots}
  \pm \epsilon_i \pm \epsilon_j,\quad
  \pm \epsilon_i
  \pm \epsilon_i\pm\delta_k,\quad
  \pm\delta_k\pm \delta_l,\qquad
  i\neq j,\quad i,j\neq r,\quad k,l\neq s \ ,
\end{equation}
in the case of $\osp{2M+1}{2N}$. Those in
eq.~\eqref{eq:red_d_roots} correspond to the root system of an
$\osp{2M-2}{2N-2}$ subalgebra, while the roots in
eq.~\eqref{eq:red_b_roots} are associated with an
$\osp{2M-1}{2N-2}$ subalgebra. Again, one may take the  Cartan
Cartan generators $H_{\epsilon_i}$, $H_{\delta_k}$ as a basis of
the Cartan subalgebra. The cohomological reduction of the Cartan
subalgebra goes exactly as in
eq.~\eqref{eq:cartan_algebra_reduction}
\begin{equation}
  \Ker(\epsilon_r\pm\epsilon_r)/\mathbb{C}H_{\epsilon_r\pm\delta_s}
  \ = \  \Ker(\epsilon_r\pm\epsilon_r)\cap \Ker(\epsilon_r\mp\epsilon_r)
  \ = \ \Ker_{\epsilon_r}\cap\Ker_{\delta_s} \ .
\end{equation}
Therefore we conclude that
\begin{equation}\label{eq:coh3}
  \Coh_{\epsilon_r\pm \delta_s}(\osp{R}{2N})\ \simeq\  \osp{R-2}{2N-2}  \ ,
\end{equation}
for any choice of $R$. At this point we have determined the
cohomology $H_Q(\g)$ for all elements $Q$ that belong to the Cartan
eigenspace $\mathfrak{g}_q$ of an isotropic root $q$.
\medskip

From eqs.~(\ref{eq:coh1}, \ref{eq:coh2}, \ref{eq:coh3}) we may
infer that, up to isomorphism, the cohomological reduction of
$\mathfrak{g}$ with respect to $Q$ does not depend on the choice
of the isotropic root $q$. This gives rise to the following
question: How can we characterize $Q$s that give rise to different
Lie superalgebras $\Coh_Q(\mathfrak{g})$? In the following we want
to prove that the isomorphism class of the cohomological reduction
depends only on the rank of $Q$ in the fundamental representation.
To begin with we observe that an automorphism $\gamma$ of
$\mathfrak{g}$ induces an automorphism of the cohomology, i.e.\
\begin{equation}\label{eq:aut_coh}
  \Coh_Q (\mathfrak{g})\ \simeq\  \Coh_{\gamma(Q)} (\mathfrak{g}) \ .
\end{equation}
The main idea is to use the group of inner automorphisms provided
by the even subalgebra of $\mathfrak{g}$ in order to bring a general
$Q$ with vanishing self-bracket to some simpler form.

Consider the Lie superalgebra $\gl{M}{N}$ first. Let $V$, $V_M$
and $V_N$ be the fundamental $\gl{M}{N}$, $\bgl{M}$ and $\bgl{M}$
modules, respectively.
To bring $Q$ to some simpler form, we shall use the following
$\gl{M}{N}_{\bar{0}}\simeq \bgl{M}\oplus\bgl{N}$ module
isomorphism
\begin{equation}\label{eq:is_odd_1}
\gl{M}{N} _{\bar{1}}\ \simeq\  V_M \otimes_{\mathbb{C}}V^*_N\oplus
V_N\otimes_{\mathbb{C}}V^*_M  \ ,
\end{equation}
where $V^*$ denotes the dual representation. The module
isomorphism~\eqref{eq:is_odd_1} is provided by the invertible
linear map
\begin{align}
  \varphi(v\otimes \alpha)(a) &=\ v \alpha(a),&
  v\otimes\alpha &\in V_M\otimes_{\mathbb{C}}V^*_N\\[1mm] \notag
  \varphi(v\otimes \alpha)(u) &=\ 0,&  u&\in V_M\\[1mm] \notag
  \varphi(a\otimes \omega )(v) &=\  a \omega(v),&
  a\otimes\omega &\in V_N\otimes_{\mathbb{C}}V^*_M\\[1mm] \notag
  \varphi(a\otimes \omega )(b) &=\ 0,& b &\in V_N\ .
\end{align}
We say that $Q$ has rank $(k,l)$ if it can be represented as
\begin{equation}
  \varphi^{-1}(Q)\ = \ \sum_{i=1}^k v_i\otimes \alpha^i +\sum_{i=1}^l
  a_i\otimes \omega^i \ ,
\end{equation}
where all $v$'s, $a$'s, $\alpha$'s and $\omega$'s are linearly
independent among themselves. Clearly $k,l\leq \min(M,N)$. Let
$b_1,\dots,b_M$ denote a basis of $V_M$ and $f_1,\dots,f_N$ be a
basis of $V_N$. Denote by $b^i$, $f^k$ the dual bases. Then, from
the definition of the general linear group, there are elements
$A'\in\BGL{M}$, $B'\in\BGL{N}$ such that
\begin{equation}
  v_i \ =\  A'\cdot b_i,\quad \alpha^i \ =\  B'\cdot f^i,\qquad
  i=1,\dots,k
  \ .
\end{equation}
Moreover, the group elements $A',B'$ are not unique --- their
action on the remaining basis vectors $b_{k+1},\dots,b_M$ and
$f^{k+1},\dots,f^N$ is not fixed. Choosing an inner automorphism
$\gamma'=\Ad{A'}^{-1}\circ\Ad  B'^{-1}$ we see that one can bring
$Q$ to the simpler form
\begin{equation}\label{eq:first_red_gl}
  \varphi^{-1}(\gamma'(Q)) \ =\  \sum_{i=1}^k b_i\otimes f^i +\sum_{i=1}^l
  a'_i\otimes \omega'^i \ ,
\end{equation}
where $a'_i = B'^{-1}\cdot a_i$ and $\omega'^i = A'^{-1}\cdot\omega^i$.
The condition $Q^2=0$ is equivalent to the following constraints on
the vectors $a'_i,\, \omega'^i$ in eq.~\eqref{eq:first_red_gl}
\begin{equation*}
  f^j(a'_i)\ =\ 0,\quad \omega'^i(b_j)\ =\ 0\ ,
\end{equation*}
where $i=1,\dots,l$,  $j=1,\dots,k$. This means that the vectors
$a'_i$ lie entirely in the subspace of $V_N$ spanned by the basis
vectors $f_{k+1},\dots,f_N$, while the form $\omega'^i$ lies in
the subspace of $V^*_M$ that is spanned by the basis forms
$b^{k+1},\dots,b^M$. Therefore, the linear independence of $a'_i$,
$\omega'^i$ imposes an additional restriction on the rank $(k,l)$
of $Q$
\begin{equation}\label{eq:rank_cond_gl}
  k+l\ \leq\  \min(M,N)\ .
\end{equation}
The existence of the group elements $A''\in\BGL{M}$ and
$B''\in\BGL{N}$ satisfying
\begin{equation}
  A''\cdot b_i\ =\ b_i, \quad B''\cdot f^i \ = \ f^i\ ,
\end{equation}
for $i=1,\dots,k$ and
\begin{equation}
  a'_m\ =\ A''\cdot f_m,\quad \omega'^n\ = \ B''\cdot b^n\ ,
\end{equation}
for $m=k+1,\dots, k+l$ and $n=k+1,\dots,k+l$ is ensured by
eq.~\eqref{eq:rank_cond_gl}. Defining $\gamma''=\Ad A''^{-1}\circ
\Ad B''^{-1}$ we see that $Q$ can be brought into a standard form
which depends only on its rank $(k,l)$
\begin{equation}
  \varphi^{-1}((\gamma''\circ\gamma')(Q)) \ = \ \sum_{i=1}^k
  b_i\otimes f^i+\sum_{i=k+1}^{k+l} f_i\otimes b^i\ .
\end{equation}
We perform the cohomological reduction of $\gl{M}{N}$ with
respect to the fermionic generators
\begin{equation}\label{eq:canonical_form_gl}
  \varphi\left(\sum_{i=1}^k
  b_i\otimes f^i+\sum_{i=k+1}^{k+l} f_i\otimes b^i\right)
\end{equation}
by a lengthy but straightforward calculation. Thereby, we are lead
to the following statement
\begin{equation}
 \Coh_Q(\gl{M}{N}) \ \simeq\  \gl{M-\rank{(Q)}}{N-\rank{(Q)}} \ ,
\end{equation}
where the total rank of $Q$ is defined as $\rank{(Q)} = k+l\leq
\min{(M,N)}$.

The generalization to the superalgebras $\sgl{M}{N}$ is
straightforward. The procedure to bring $Q$ to the canonical
form~\eqref{eq:canonical_form_gl} is identical with the one
described in the $\gl{M}{N}$ case. The cohomological reduction of
$\sgl{M}{N}$ with respect to this canonical form of $Q$ may be
performed explicitly and leads to the expected result
\begin{equation}
 \Coh_Q(\sgl{M}{N})\ \simeq\
 \sgl{M-\rank{(Q)}}{N-\rank{(Q)}} \ .
 \end{equation}

Finally, let us also deal with the Lie superalgebras
$\osp{R}{2N}$, where $R=2M$ or $R=2M+1$. Denote by $V$, $V_R$ and
$V_{2N}$ the fundamental $\osp{R}{2N}$, $\bso{R}$ and $\bsp{2N}$
modules, respectively. Furthermore, let
$(\phantom{x},\phantom{x})$ be the symmetric invariant scalar
product in $V_R$ and $\langle \phantom{x},\phantom{x}\rangle$ be
the antisymmetric invariant scalar product in $V_{2N}$. For $R=2M$
we shall consider a basis $b_1,\dots,b_{2M}$ such that the matrix
elements of the scalar product $S_{ij}=(b_i,b_j)$ take the form
\begin{equation}\label{eq:sc_prod_v0_ev}
  S \ = \ \begin{pmatrix}
        0_{M\times M} & 1_{M\times M}\\[1mm]
    1_{M\times M} & 0_{M\times M}
      \end{pmatrix} \ ,
\end{equation}
while for $R=2M+1$ we shall consider a basis $b_1,\dots,b_{2M+1}$
such that the matrix elements of the scalar product $S_{ij}=(b_i,b_j)$
take the form
\begin{equation}\label{eq:sc_prod_v0_odd}
  S\ =\ \begin{pmatrix}
        0_{M\times M} & 1_{M\times M} & 0_{M\times 1}\\[1mm]
    1_{M\times M} & 0_{M\times M} & 0_{M\times 1}\\[1mm]
    0_{1\times M} & 0_{1\times M} & 1
      \end{pmatrix} \ .
\end{equation}
We also consider a basis $f_1,\dots,f_{2N}$ such that
the matrix elements of the scalar product $A_{ij}=
\langle f_i,f_j\rangle$ take the form
\begin{equation}\label{eq:sc_prod_v1}
  A \ = \ \begin{pmatrix}
        0_{N\times N} & -1_{N\times N}\\[1mm]
    1_{N\times N} & 0_{N\times N}
      \end{pmatrix} \ .
\end{equation}
With respect to the decomposition $V\simeq V_R\oplus V_{2N}$,
the invariant scalar product in $V$ is $G=S\oplus A$.

To bring $Q$ into some simpler form, we shall use the following
$\osp{R}{2N}_{\bar{0}}\simeq \bso{R}\oplus\bsp{2N}$ module
isomorphism
\begin{equation}\label{eq:is_odd_2}
\osp{R}{2N} _{\bar{1}}\ \simeq\  V_R \otimes_{\mathbb{C}}V_{2N}\ ,
\end{equation}
which is provided by the invertible linear map
\begin{align}\label{eq:isomorphism_odd_osp}
  \chi(s\otimes a)(b) & =\ s\langle a,b\rangle, \qquad
  s\otimes a \in V_R\otimes_{\mathbb{C}}V_{2N}\\[2mm] \notag
  \chi(s\otimes a)(t) & =\ a (s,t),\qquad
  t\in V_R,\; b\in V_{2N} \ .
\end{align}
We say that $Q$ has rank $k$ if it can be represented as
\begin{equation}
  \chi^{-1}(Q) \ = \ \sum_{i=1}^k s_i\otimes a_i\ ,
\end{equation}
where the $s$'s and $a$'s are linearly independent among
themselves. Of course $k\leq \min(R,2N)$. The condition $Q^2=0$
can be worked out from eqs.~\eqref{eq:isomorphism_odd_osp} to be
equivalent to the following constraints on the vectors $s_i$,
$a_i$
\begin{equation}
  (s_i,s_j)\ =\ 0,\qquad \langle a_i,a_j\rangle \ =\ 0\ ,
\end{equation}
for $i,j=1,\dots,k$. These conditions are compatible with the
linear independence of the $s_i$ and $a_i$ if and only if
\begin{equation}
  k\ \leq\  M,\qquad k\ \leq\  N\ .
\end{equation}
This restriction on the rank $k$ allows us to define some linearly
independent vectors $s_{k+1},.., s_R$ and $a_{k+1}, \dots, a_{2N}$
such that the matrix elements $(s_i,s_j)$, for $i,j=1,\dots,R$ and
$\langle a_i,a_j\rangle $, for $i,j=1,\dots,2N$ take the form in
eqs.~(\ref{eq:sc_prod_v0_ev}, \ref{eq:sc_prod_v0_odd}) and in
eq.~\eqref{eq:sc_prod_v1}, respectively. Therefore, from the
definition of the $\BSO{R}$ and $\BSP{2N}$ groups, there exist
elements $A\in\BSO{R}$ and $B\in\BSP{2N}$ such that
\begin{equation}
  s_i \ = \ A\cdot b_i,\qquad a_j \ =\  B\cdot f_j\ ,
\end{equation}
for $i=1,\dots,R$ and $j=1,\dots,2N$. We see that $Q$ can be
brought to a simple standard form depending only on its rank $k$
\begin{equation}
  \chi^{-1}(\gamma(Q))\  = \ \sum_{i=1}^k b_i \otimes f_i
\end{equation}
by acting with the inner automorphism $\gamma = \Ad A^{-1}\circ\Ad
B^{-1}$. We perform the cohomological reduction of $\osp{R}{2N}$
with respect to the fermionic generators
\begin{equation}
  \chi\left(\sum_{i=1}^k
  b_i\otimes f_i\right)
\end{equation}
by an explicit calculation. Thereby, we end up with the following
statement
\begin{equation}
 \Coh_Q(\osp{R}{2N})\  \simeq\ \osp{R-2 \rank{(Q)}}{2N-2 \rank{(Q)}} \ ,
\end{equation}
where $\rank{(Q)} = k\leq \min{([R/2],N)}$.


\subsection{Reduction of modules}
\label{sec:crm}

Let $\mathfrak{g}$ be one of the superalgebras considered in
sec.~\ref{sec:cr_lsa} and $Q$ be an odd element of $\mathfrak{g}$
with vanishing self-bracket $[Q,Q]=2Q^2=0$. As we have shown in
sec.~\ref{sec:cr_lsa}, there is a subalgebra
$\mathfrak{h}\subset\mathfrak{g}$ such that
$\Coh_Q(\mathfrak{g})\simeq \mathfrak{h}$.

First, notice that there is a $\mathfrak{h}$-stable filtration
\begin{equation}\label{eq:d_stab_filtr}
  V\ \supset\  \Ker_Q V \ \supset \ \Img_Q V \ .
\end{equation}
Indeed, $V$ is a $\mathfrak{h}$-submodule by restriction,
while $\Ker_Q V$ and $\Img_Q V$ are  $\mathfrak{h}$-submodules because
$\mathfrak{h}\subset\Ker_Q \mathfrak{g}$.
Finally, $\Ker_Q V\supset \Img_Q V$ follows from $Q^2=0$.

The existence of the $\mathfrak{h}$-stable filtration~\eqref{eq:d_stab_filtr}
means that $\Coh_Q(V)$ is generally a quotient of a submodule of the restriction
of $V$ to $\mathfrak{h}$.
However, if $V$ is self-dual, that is $V$ has an invariant non-degenerate scalar
product, then one can repeat the steps~(\ref{eq:start_mark} -- \ref{eq:ort_all},
\ref{eq:pr_drd_alg} -- \ref{eq:dir_sum_dec_coh_alg})  and prove a similar
$\mathfrak{h}$-module direct sum decomposition for $V$
\begin{equation}\label{eq:eq3_bla}
  V\ \simeq \ W \oplus E\oplus F\ ,
\end{equation}
where $W\simeq \Coh_Q(V)$ and $E=\Img_Q V$.
We list some of the properties of the subquotients $\Coh_Q(V)$
that will prove useful for the following.
\begin{lemma}\label{eq:3_prop}
Let $U,V$ be $\mathfrak{g}$ modules.
Then the following $\mathfrak{h}$-module isomorphisms hold
\begin{enumerate}
\item[a)] $\Coh_Q (U\oplus V) \ \simeq \ \Coh_Q(U) \oplus \Coh_Q(V)$
\item[b)] $\Coh_Q(V^*)\ \simeq\  \Coh_Q(V)^*$
\item[c)] $\Coh_Q(U\otimes V)\ \simeq \ \Coh_Q(U)\otimes\Coh_Q (V)$, if $U$, $V$ are finite dimensional.
\end{enumerate}
\end{lemma}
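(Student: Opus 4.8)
The plan is to handle the three parts separately; in each case I would exhibit the obvious comparison map and check that it is a linear isomorphism. Because every such map is assembled from the operations --- direct sum, dualization, and the comultiplication of $\g$ --- through which $\h$ acts on the relevant $Q$-cohomologies, and because $\h\subset\Ker_Q\g$ preserves each term of the filtration \eqref{eq:d_stab_filtr} on $U$, $V$, their duals and their tensor products alike, the resulting isomorphisms come out $\h$-linear for free. Part a) is then immediate: $Q$ acts diagonally on $U\oplus V$, so $\Ker_Q(U\oplus V)=\Ker_Q U\oplus\Ker_Q V$ and $\Img_Q(U\oplus V)=\Img_Q U\oplus\Img_Q V$, and passing to the quotient yields $\Coh_Q(U\oplus V)\simeq\Coh_Q(U)\oplus\Coh_Q(V)$.

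For part b) the input is the Koszul sign in the dual action, $(Q\phi)(v)=-(-1)^{|\phi|}\phi(Qv)$. From it one reads off that $\phi\in\Ker_Q V^{*}$ precisely when $\phi$ annihilates $\Img_Q V$, i.e. $\Ker_Q V^{*}=\operatorname{Ann}(\Img_Q V)$; and that every $Q\psi$ annihilates $\Ker_Q V$, while conversely any functional vanishing on $\Ker_Q V$ factors through $V/\Ker_Q V\cong\Img_Q V$ and is therefore of the form $Q\psi$, so that $\Img_Q V^{*}=\operatorname{Ann}(\Ker_Q V)$. Hence $\Coh_Q(V^{*})=\operatorname{Ann}(\Img_Q V)/\operatorname{Ann}(\Ker_Q V)$, and the elementary identity $\operatorname{Ann}(B)/\operatorname{Ann}(A)\cong(A/B)^{*}$ for subspaces $B\subseteq A\subseteq V$ --- realized by restricting a functional on $V$ to $A$ --- identifies the right-hand side with $(\Ker_Q V/\Img_Q V)^{*}=\Coh_Q(V)^{*}$. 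Equivalently, I would note that the natural pairing $\Coh_Q(V)\times\Coh_Q(V^{*})\to\mathbb{C}$, $([v],[\phi])\mapsto\phi(v)$, is well defined and non-degenerate.

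For part c) the structural fact is that $Q$ acts on $U\otimes V$ as an odd derivation, $Q(u\otimes v)=Qu\otimes v+(-1)^{|u|}\,u\otimes Qv$, coming from the comultiplication of $\g$; thus $(U\otimes V,Q)$ is the tensor product of the complexes $(U,Q)$ and $(V,Q)$, and $\bar\mu\colon\Coh_Q(U)\otimes\Coh_Q(V)\to\Coh_Q(U\otimes V)$, $[u]\otimes[v]\mapsto[u\otimes v]$, is a well-defined $\h$-linear map. To see it is bijective I would split $(U,Q)$, as a (not necessarily $\h$-equivariant) complex of vector spaces, into $(\Coh_Q U,0)$ plus an acyclic --- hence contractible --- complement; tensoring with $(V,Q)$ and using that the tensor product of a contractible complex with any complex is contractible shows that $U\otimes V$ coincides with $\Coh_Q U\otimes V$ up to an acyclic summand, and iterating in the second variable gives $\Coh_Q(U\otimes V)\simeq\Coh_Q U\otimes\Coh_Q V$, visibly via $\bar\mu$. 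This is just the algebraic K\"unneth theorem over the field $\mathbb{C}$, and the finite-dimensionality hypothesis in c) serves only to make $\Coh_Q U\otimes\Coh_Q V$ the correct target (for completed modules such as the $L_2$-spaces of later sections one would pass to the completed tensor product after first reducing to finite-dimensional constituents via a)).

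None of this is deep, and I expect the only real care to be in the sign bookkeeping: one must get the Koszul signs right in the dual and derivation formulas for $Q$, and verify that $Q^{2}=0$ is genuinely inherited by $V^{*}$ and by $U\otimes V$, since a stray sign there would spoil the $\h$-equivariance of the comparison maps even though their underlying vector-space content would be unaffected.
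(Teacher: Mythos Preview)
Your argument is correct. Parts a) and b) are essentially the paper's own proofs in different clothing: the paper phrases a) via orthogonal idempotents $e_U,e_V$ rather than saying ``$Q$ acts diagonally'', and for b) it says that the projection $\pi$ on $\Coh_Q(V^*)$ is restriction to $\Ker_Q V$ and that $\Ker_Q V^*\simeq (V/\Img_Q V)^*$, which is exactly your annihilator identifications $\Ker_Q V^*=\operatorname{Ann}(\Img_Q V)$ and $\Img_Q V^*=\operatorname{Ann}(\Ker_Q V)$ rewritten.

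Part c) is where you genuinely diverge. The paper does it by brute force: it picks Jordan bases $\{h'_a,e'_i,f'_i\}$ of $U$ and $\{h''_b,e''_j,f''_j\}$ of $V$ with $Q\cdot h=0$, $Q\cdot f=e$, writes down the induced tensor basis, and reads off directly that $\Ker_Q(U\otimes V)$ is spanned by $h'_a\otimes h''_b$, $h'_a\otimes e''_j$, $e'_i\otimes h''_b$, $e'_i\otimes f''_j-(-1)^{|e'_i|}f'_i\otimes e''_j$, while $\Img_Q(U\otimes V)$ is spanned by the last three types, leaving $\Coh_Q(U\otimes V)$ spanned by $h'_a\otimes h''_b$. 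Your route --- split $(U,Q)\simeq(\Coh_Q U,0)\oplus(\text{acyclic})$ as complexes of vector spaces, then invoke that contractible $\otimes$ anything is contractible --- is the K\"unneth argument one step up in abstraction. The paper's Jordan basis \emph{is} your splitting made explicit (the span of $e'_i,f'_i$ is the acyclic complement), so the two proofs are secretly the same computation; yours packages it more cleanly and makes clear why finite dimensionality is inessential over a field, while the paper's version has the virtue of being self-contained and making the cocycle $e'_i\otimes f''_j-(-1)^{|e'_i|}f'_i\otimes e''_j$ visible. Your remark that the non-$\h$-equivariant splitting is harmless because it is only used to check bijectivity of the already $\h$-linear map $\bar\mu$ is the right way to handle equivariance, and is a point the paper leaves implicit.
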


\begin{proof} a) The direct sum of the modules $U$ and $V$ means that
    there are orthogonal idempotents $e_U$ and $e_V$ such that
    they commute with the action of $\mathfrak{g}$ and $e_U U =
    U$, $e_V V =V$.
    One thus has
    \begin{align*}
      &e_U\Ker_Q (U\oplus V) \ =\ \Ker_Q(e_UU\oplus e_UV)\ =\ \Ker_Q
      U\\[1mm]
      &e_U\Img_Q (U\oplus V) \ =\ \Img_Q(e_UU\oplus e_UV)\ =\ \Img_Q U
    \end{align*}
and therefore $e_U \Coh_Q(U\oplus V)=\Coh_Q (U)$.
Similarly, $e_V \Coh_Q(U\oplus V)=\Coh_Q (V)$, which completes the proof of a).

\noindent b)
The elements of $\Coh_Q(V^*)$ are equivalence classes
$\pi(\mu)=\mu+Q\cdot V^*$ of forms $\mu\in \Ker_Q V^*$, that is
$\pi(\mu)$ is the equivalence class of forms that have the same
restriction on $\Ker_Q V$ as $\mu$. Therefore the projection map
$\pi$ is actually the restriction to $\Ker_Q V$. Moreover, the
condition that $\mu\in\Ker_Q V^*$ is equivalent to the requirement
that $\mu$ vanishes on $\Img_Q V$, that is $\Ker_Q V^*  \simeq
(V/\Img_Q V)^*$. These two observation lead to b)
\begin{align*}
  \Coh_Q (V^*) \ =\  \pi(\Ker_Q V^*)\  = \ \Ker_Q V^*\big\vert_{\Ker_Q
  V}&\simeq (V/\Img_Q V)^*\big\vert_{\Ker_Q V}\\[1mm]
   {}&=\ (\Ker_Q V/\Img_Q V)^*\ =\ \Coh_Q (V)^*
\end{align*}

\noindent c) There exist bases $h'_a$, $e'_i$, $f'_i$ of $U$ and
$h''_b$, $e''_j$, $f''_j$ of $V$ that bring the action of $Q$ to a
Jordan normal form
\begin{align*}
  Q\cdot h'_a & = \ 0,&
  Q\cdot e'_i & = \ 0,&
  Q\cdot f'_i & = \ e'_i\\[1mm]
  Q\cdot h''_b & =\ 0,&
  Q\cdot e''_j & =\ 0,&
  Q\cdot f''_j & =\ e'_j \ .
\end{align*}
Computing the action of $Q$ in the corresponding tensor basis of $U\otimes V$
we get that $\Ker_Q (U\otimes V)$ is spanned by
\begin{equation*}
  h'_a\otimes h''_b,\quad h'_a \otimes e''_j,\quad e'_i\otimes
  h''_b,\quad e'_i\otimes f''_j-(-1)^{|e'_i|}f'_i\otimes e''_j
\end{equation*}
and $\Img_Q(U\otimes V)$ is spanned by
\begin{equation*}
  h'_a \otimes e''_j,\quad e'_i\otimes
  h''_b,\quad e'_i\otimes f''_j-(-1)^{|e'_i|}f'_i\otimes e''_j \ ,
\end{equation*}
where $|\cdot|$ denotes the grading function.
Thus, $\Coh_Q(U\otimes V)$ is spanned by $h'_a\otimes h''_b$.
Finally we notice that $h'_a$ spans $\Coh_Q(U)$ and $h''_b$ spans $\Coh_Q(V)$,
which proves c).
\end{proof}

For a finite dimensional $\mathfrak{g}$-module $V$ we observe that
\begin{equation}
  \sdim \Coh_Q(V) \ =\  \sdim V \ .
\end{equation}
The statement follows from the existence of a Jordan normal form
for the representation of $Q$ in $V$. The vanishing of the
superdimension of a module $V$ is a necessary constraint for the
triviality of the cohomological reduction $\Coh_Q(V)$. Atypical
simple modules do not generally satisfy this constraint, while
projective modules do, see \cite{Germoni:thesis}.
\begin{lemma}\label{lem:proj}
  If $V$ is a finite dimensional projective $\mathfrak{g}$-module,
  then $\Coh_Q (V)=0$.
\end{lemma}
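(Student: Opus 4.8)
The plan is to reduce the claim to a cohomology computation on modules induced from $\mathfrak{g}_{\bar{0}}$, where an explicit contracting homotopy finishes the job; the hypothesis $Q\neq 0$ (the case $Q=0$, i.e.\ $\rank(Q)=0$, being vacuous) enters only at the very end. For a finite-dimensional $\mathfrak{g}_{\bar{0}}$-module $M$ write $\mathrm{Ind}(M)=U(\mathfrak{g})\otimes_{U(\mathfrak{g}_{\bar{0}})}M$; by Poincar\'e--Birkhoff--Witt $\mathrm{Ind}(M)\cong\Lambda\mathfrak{g}_{\bar{1}}\otimes M$ as a vector space, so it again lies in the category of finite-dimensional $\mathfrak{g}$-modules. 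The crucial observation is that \emph{every} finite-dimensional projective $\mathfrak{g}$-module $V$ is a direct summand of $\mathrm{Ind}(V|_{\mathfrak{g}_{\bar{0}}})$: the counit map $u\otimes m\mapsto u\cdot m$ is a surjection $\mathrm{Ind}(V|_{\mathfrak{g}_{\bar{0}}})\twoheadrightarrow V$ of finite-dimensional $\mathfrak{g}$-modules, and it splits precisely because $V$ is projective. Since $\Coh_Q$ commutes with direct sums (Lemma~\ref{eq:3_prop}), it then suffices to prove $\Coh_Q(\mathrm{Ind}(M))=0$ for an arbitrary finite-dimensional $\mathfrak{g}_{\bar{0}}$-module $M$.

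For this I would choose a homogeneous PBW basis of $U(\mathfrak{g})$ in which $Q$ is the \emph{first} odd generator. Tensoring over $U(\mathfrak{g}_{\bar{0}})$ absorbs the even generators into $M$ and produces a basis of $\mathrm{Ind}(M)$ given by the vectors $w\otimes m$ and $Q\,w\otimes m$, where $w$ ranges over PBW monomials in the remaining odd generators and $m$ over a basis of $M$. Because $Q$ comes first in the ordering and $Q\cdot Q=Q^{2}=0$ in $U(\mathfrak{g})$, the action of $Q$ sends $w\otimes m\mapsto Q\,w\otimes m$ and $Q\,w\otimes m\mapsto 0$. Writing $N$ for the span of the $w\otimes m$, one gets $\mathrm{Ind}(M)=N\oplus QN$ with $Q$ an isomorphism $N\to QN$ annihilating $QN$; equivalently, $\mathrm{Ind}(M)$ is free over the exterior algebra $\mathbb{C}[Q]/(Q^{2})$. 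The operator $K$ defined by $K|_{N}=0$ and $K|_{QN}=(Q|_{N})^{-1}$ then satisfies $QK+KQ=\mathrm{id}$, so that $\Ker_Q\mathrm{Ind}(M)=\Img_Q\mathrm{Ind}(M)$ and hence $\Coh_Q(\mathrm{Ind}(M))=0$.

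The step I expect to require the most care is the reduction above --- the assertion that a finite-dimensional projective $\mathfrak{g}$-module is a direct summand of a module induced from $\mathfrak{g}_{\bar{0}}$. This is a standard structural fact about the category of finite-dimensional modules over the basic classical Lie superalgebras $\gl{M}{N}$, $\sgl{M}{N}$ and $\osp{R}{2N}$ treated in Section~\ref{sec:cr_lsa} (compare \cite{Germoni:thesis}); it does, however, rely on ``projective'' being understood within this category, so that the counit surjection actually splits there. By contrast, once one sits on an induced module the homotopy argument is entirely elementary, and it is precisely there that $Q\neq 0$ is used.
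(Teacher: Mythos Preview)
Your proof is correct and follows the same two-step strategy as the paper: reduce to modules induced from $\mathfrak{g}_{\bar{0}}$, then kill the cohomology there by putting $Q$ first in a PBW ordering. The final PBW computation is identical in substance to the paper's (the paper phrases it as a Jordan normal form, you as a contracting homotopy, but it is the same calculation).

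Your reduction step is, however, more economical than the paper's. The paper first decomposes $V$ into indecomposable projective covers $P(\Lambda)$, cites an external reference for the projectivity of the induced modules $B(\Lambda)=\mathrm{Ind}^{\mathfrak{g}}_{\mathfrak{g}_{\bar{0}}}\mathrm{Res}\,S(\Lambda)$, and then argues via tops that each $P(\Lambda)$ occurs as a direct summand of $B(\Lambda)$. You bypass all of this by observing that the counit $\mathrm{Ind}(V|_{\mathfrak{g}_{\bar{0}}})\twoheadrightarrow V$ is a surjection of finite-dimensional $\mathfrak{g}$-modules and splits simply because $V$ is projective. This uses nothing beyond the defining lifting property and makes the argument self-contained; the paper's route, by contrast, buys nothing extra here.
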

\begin{proof}
Let $\Gamma^+$ be the set of weights $\Lambda$ parametrizing the
simple finite dimensional $\mathfrak{g}$-modules $S(\Lambda)$.
Denote by $P(\Lambda)$ be the projective covers of $S(\Lambda)$,
that is the indecomposable $\mathfrak{g}$-modules with the top
$\topmod(P(\Lambda))=S(\Lambda)$. The projective module $V$ can
then be represented as
\begin{equation}\label{eq:ref_proof}
  V\ \simeq \ \bigoplus_{\Lambda\in \Gamma^+} d_\Lambda(V) P(\Lambda)\ ,
\end{equation}
where only a finite number of multiplicities $d_\Lambda(V)$ do not
vanish. Proving~\ref{lem:proj} becomes equivalent to proving that
$\Coh_Q(P(\Lambda))=0$ for any $\Lambda\in \Gamma^+$. We show in
the following that this task is equivalent to yet another one.

Define the induced modules
\begin{equation}\label{eq:def_ind_mod_pr}
 B(\Lambda)\ =\ \mathrm{Ind}^{\mathfrak{g}}_{\mathfrak{g}_{\bar{0}}}\mathrm{Res}_{
 \mathfrak{g}_{\bar{0}}}S(\Lambda)\  =\
 \mathcal{U}(\mathfrak{g})\otimes_{\mathfrak{g}_{\bar{0}}}
 S(\Lambda)
\end{equation}
which are finite dimensional and, according to \cite{Germoni2000:MR1840448}, are
also projective in the category of finite dimensional $\mathfrak{g}$-modules.
The surjective map $\Pi:B(\Lambda)\rightarrow S(\Lambda)$
\begin{equation}
  \Pi(u\otimes_{\mathfrak{g}_{\bar{0}}} s)\ =\ u\cdot s
\end{equation}
defines a projective $\mathfrak{g}$-module homomorphism.
By definition, $\topmod(B(\Lambda))$ is the direct sum of all
quotients of $B(\Lambda)$ by a maximal submodule.
Because $B(\Lambda)/\Ker \Pi = S(\Lambda)$ is simple,
$\Ker \Pi$ is a maximal submodule and
therefore $S(\Lambda)\subset \topmod (B(\Lambda))$.
On the other hand, decomposing $B(\Lambda)$ as in
eq.~\eqref{eq:ref_proof} we explicitly compute
\begin{equation}
  \topmod(B(\Lambda)) \ =\  \bigoplus_{\Lambda'\in \Gamma^+}
  d_{\Lambda'}(B(\Lambda)) \topmod(P(\Lambda'))\ =\
\bigoplus_{\Lambda'\in \Gamma^+} d_{\Lambda'}(B(\Lambda)) S(\Lambda') \ .
\end{equation}
which from $S(\Lambda)\subset \topmod(B(\Lambda))$ implies that
$P(\Lambda)$ must be a direct summand of $B(\Lambda)$. Thus, we
see that proving  $\Coh_Q(P(\Lambda))=0$ for any $\Lambda\in X^+$
is equivalent to proving that $\Coh_Q(B(\Lambda))=0$ for any
$\Lambda\in \Gamma^+$.

To compute $\Coh_Q(B(\Lambda))$ we construct a basis of
$B(\Lambda)$ which brings the action of $Q$ to a Jordan normal
form. Let $a_1,\dots,a_B$ be a basis of $\mathfrak{g}_{\bar{0}}$
and $b_1,\dots,b_F$ be a basis of $\mathfrak{g}_{\bar{1}}$.
According to Poincar\'e-Birkoff-Witt theorem, the elements of the
form
\begin{equation}\label{eq:basis_env_alg}
  b_{i_1} \cdots b_{i_k} a^{l_1}_1 \cdots a^{l_B}_B,\qquad
  k,l_i\geq 0,\; i_1< \cdots < i_k
\end{equation}
are a basis of $\mathcal{U}(\mathfrak{g})$. Given a basis
$s_\alpha$ of $S(\Lambda)$, the basis~\eqref{eq:basis_env_alg} of
$\mathcal{U}(\mathfrak{g})$ provides a basis
\begin{equation}\label{eq:basis_ind_module}
  b_{i_1} \cdots b_{i_k} \otimes s_\alpha, \qquad k\geq 0,\; i_1<
  \cdots < i_k\
\end{equation}
 of $B(\Lambda)$ by means of the def.~\eqref{eq:def_ind_mod_pr}.

Choosing a basis such that $b_1=Q$ immediately brings the action
of $Q$ to a Jordan normal form. It then becomes obvious that
$\Ker_Q (B(\Lambda))=\Img_Q(B(\Lambda))$ is spanned by the basis
vectors~\eqref{eq:basis_ind_module} with $i_1=1$.
\end{proof}

\subsection{Reduction of smooth functions on $G/G'$}
\label{sec:smooth_func} We shall restrict to Lie superalgebras
$\mathfrak{g}$ of the type considered in sec.~\ref{sec:cr_lsa}.
They all have an invariant, supersymmetric, consistent and
non-degenerate bilinear form
$(\phantom{x},\phantom{x}):\mathfrak{g}\times
\mathfrak{g}\rightarrow \mathbb{C}$. Consider a subalgebra
$\mathfrak{g}^{\prime}$ of $\mathfrak{g}$ such that
$(\phantom{x},\phantom{x})$ restricts to a  non-degenerate
bilinear form on $\mathfrak{g}'$ and suppose there is an odd
element $Q\in\mathfrak{g}^{\prime}$ with vanishing self-bracket
$[Q,Q]=2Q^2=0$.

According to eqs.~(\ref{eq:coh_isom},
\ref{eq:dir_sum_dec_coh_alg}), $\Coh_Q(\mathfrak{g})$ and
$\Coh_Q(\mathfrak{g}^{\prime})$ are isomorphic to some subalgebras
$\mathfrak{h}\subset \mathfrak{g}$ and, respectively,
$\mathfrak{h}^{\prime}\subset \mathfrak{g}^{\prime}$, with the
following direct sum decompositions
\begin{align}\label{eq:eq2}
\mathfrak{g}& \simeq\
\mathfrak{h}\oplus\mathfrak{e}\oplus\mathfrak{f}\\[1mm]
  \mathfrak{g}^{\prime}& \simeq\
\mathfrak{h}^{\prime}\oplus\mathfrak{e}^{\prime}\oplus\mathfrak{f}^{\prime}
\ \notag
\end{align}
as $\mathfrak{h}$  and $\mathfrak{h}^{\prime}$-modules,
respectively. Here $\mathfrak{e}=\Img_Q \mathfrak{g}$,
$\mathfrak{e}^{\prime}=\Img_Q \mathfrak{g}^{\prime}$. Our
assumption $Q\in\mathfrak{g}'\subset \mathfrak{g}$ implies the
following subalgebra inclusions
\begin{equation*}
 \mathfrak{h}\subset \mathfrak{h}',\qquad \mathfrak{e}\subset
 \mathfrak{e}',\qquad \mathfrak{f}\subset \mathfrak{f}'\ .
\end{equation*}

Let $\mathfrak{m}$ be the orthogonal complement of
$\mathfrak{g}^{\prime}$ in $\mathfrak{g}$ with respect to
$(\phantom{x},\phantom{x})$. The assumption on the non-degeneracy of the form
$(\phantom{x},\phantom{x})$ and of its restriction to $\mathfrak{g}^{\prime}$
implies the following facts on $\mathfrak{m}$:
\begin{enumerate}
\item[a)] $\mathfrak{m}$ is an $\mathfrak{g}^{\prime}$-module
\item[b)] $(\phantom{x},\phantom{x}) \vert_{\mathfrak{m}\times\mathfrak{m}}$
is an $\mathfrak{g}^{\prime}$-invariant non-degenerate scalar product
\item[c)] viewed as as an $\mathfrak{g}^{\prime}$-module by restriction,
$\mathfrak{g}$ decomposes as
\begin{equation}\label{eq:eq1}
  \mathfrak{g}\ \simeq \ \mathfrak{g}^{\prime}\oplus \mathfrak{m} \ .
\end{equation}
\end{enumerate}
Statements a) and b) are rather straightforward to prove, while c)
results from the construction of a projection on
$\mathfrak{g}^{\prime}$ with the inverted metric
$(\phantom{x},\phantom{x})
\vert_{\mathfrak{g}^{\prime}\times\mathfrak{g}^{\prime}}$, much
like in eq.~\eqref{eq:pr_drd_alg}. From eq.~\eqref{eq:eq3_bla} and
point c), $\mathfrak{m}$ decomposes as an
$\mathfrak{h}^{\prime}$-module into the direct sum
\begin{equation}\label{eq:eq3}
  \mathfrak{m}\ \simeq\  \mathfrak{n}\oplus \mathfrak{p}\oplus\mathfrak{q}\ ,
\end{equation}
where $\mathfrak{n}\simeq \Coh_Q(\mathfrak{m})$ and
$\mathfrak{p}=\Img_Q \mathfrak{m}$. Computing the cohomology of
the direct sum decomposition~\eqref{eq:eq1} with the help of
property a) of lemma~\ref{eq:3_prop} and eqs.~(\ref{eq:eq2},
\ref{eq:eq3}) we get an analogous decomposition
\begin{equation}\label{eq:eq1a}
 \mathfrak{h}\ \simeq\  \mathfrak{h}'\oplus \mathfrak{n}\ .
\end{equation}

One useful consequence of eqs.~(\ref{eq:eq1} -- \ref{eq:eq1a}) is
the following $\mathfrak{h}'$-module isomorphism
\begin{equation}\label{eq:eq_ru}
 \Coh_Q(\mathfrak{g}/\mathfrak{g}^{\prime})\ \simeq\  \Coh_Q(\mathfrak{m})\ \simeq
\ \mathfrak{n}\  \simeq\  \mathfrak{h}/\mathfrak{h}'\ .
\end{equation}

Let $\mathfrak{g}_{B,\bar{0}}$ be the Grassmann envelope of
$\mathfrak{g}$ with respect to some Grassmann algebra $B$ and
$\mathfrak{g}_{B,\bar{0},\dagger}$ a real form of the Lie algebra
$\mathfrak{g}_{B,\bar{0}}$ with respect to a complex anti-linear
involutive automorphism $\dagger$.
Suppose $G$ is a connected Lie supergroup with Lie algebra
$\mathfrak{g}_{B,\bar{0},\dagger}$ and $G^{\prime}$ is a connected subgroup of
$G$ with Lie algebra
$\mathfrak{g}^{\prime}_{B,\bar{0},\dagger}$.
Let $H$ denote the subgroup of $G$ with Lie
algebra $\mathfrak{h}_{B,\bar{0},\dagger}$ and $H^{\prime}$
the subgroup of $G^{\prime}$ with Lie algebra
$\mathfrak{h}'_{B,\bar{0},\dagger}$.
We want to perform the cohomological reduction of the space of smooth functions
$\mathcal{S}(G/G^{\prime})$ with respect to $Q$ and show that there is an $H$-module isomorphism
\begin{equation}\label{eq:cental_statement}
 \Coh_Q(\mathcal{S}(G/G^{\prime}))\simeq \mathcal{S}(H/H^{\prime})\ ,
\end{equation}
where $\mathcal{S}(H/H^{\prime})$ denotes the algebra of smooth
functions on $H/H^{\prime}$. Eq.~\eqref{eq:eq_ru} was already used
in sec.~\ref{sec:into_math} to give a local argument for the
isomorphism~\eqref{eq:cental_statement}. In order to prove the
claim~\eqref{eq:cental_statement}, we shall identify
$\mathcal{S}(G/G')$ with the space $\mathcal{S}(G)^{G'}$ of smooth
functions on $G$ invariant with respect to the right $G'$-action.
We perform the same identification for
$\mathcal{S}(H/H')=\mathcal{S}(H)^{H'}$.

Let us look closer at $\Img_Q \mathcal{S}(G/G^{\prime})$. The set
of points of $G/G'$ where all elements of $\Img_Q \mathcal{S}
(G/G^{\prime})$ vanish are precisely those points of $G/G'$ which
are invariant with respect to the action of $e^{\eta Q}$, where
$\eta$ is an odd Grassmann number. We denote this subset by
$(G/G')^Q$. Let $G^Q$ and $(G')^Q$ denote the subgroup of $G$ and,
respectively, $G'$ invariant with respect to the adjoint action of
$e^{\eta Q}$. These are the subgroups on which the vector field
$D(Q)$ corresponding to the adjoint action of $Q$ vanishes. This
means that $\Img_{D(Q)} \mathcal{S}(G)$ is the subset of smooth
functions on $G$ vanishing on $G^Q$.
\begin{lemma}
 The following equivalence of supermanifolds holds
\begin{equation}\label{eq:use_of_q_in_h}
(G/G')^Q \ =\  G^Q/(G')^Q
\end{equation}
\end{lemma}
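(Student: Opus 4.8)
The plan is to identify $(G/G')^Q$ with the image of $G^Q$ in $G/G'$ and to show that this image is exactly $G^Q/(G')^Q$. First I would set up the action of $e^{\eta Q}$ carefully. For $g\in G$, the point $gG'\in G/G'$ is fixed by left multiplication by $e^{\eta Q}$ iff $e^{\eta Q} g G' = gG'$, i.e.\ iff $g^{-1}e^{\eta Q}g\in G'$. Since $\eta$ is a single odd Grassmann parameter and $Q^2=0$, we have $e^{\eta Q}=1+\eta Q$, so $g^{-1}e^{\eta Q}g = 1+\eta\,\mathrm{Ad}(g^{-1})Q$, and this lies in $G'$ iff $\mathrm{Ad}(g^{-1})Q\in\mathfrak{g}'$. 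So the first key step is the identification
\begin{equation*}
(G/G')^Q \ = \ \{\, gG' \ : \ \mathrm{Ad}(g^{-1})Q\in\mathfrak{g}' \,\}\ .
\end{equation*}
I would also record the parallel description of the subgroups: $G^Q=\{\,g\in G : \mathrm{Ad}(g)Q = Q\,\}$ is the stabilizer of $Q$ under the adjoint action (equivalently, the zero locus of the invariant vector field $D(Q)$), and likewise $(G')^Q = G^Q\cap G'$.

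The second step is the inclusion $G^Q/(G')^Q \hookrightarrow (G/G')^Q$, which should be essentially formal: if $g\in G^Q$ then $\mathrm{Ad}(g^{-1})Q = Q\in\mathfrak{g}'$ (using $Q\in\mathfrak{g}'$), so the class $gG'$ lies in $(G/G')^Q$; and the map is well-defined and injective on cosets precisely because $(G')^Q=G^Q\cap G'$. The real content is the reverse inclusion: given $g$ with $\mathrm{Ad}(g^{-1})Q\in\mathfrak{g}'$, one must produce $h\in G'$ with $gh\in G^Q$, i.e.\ solve $\mathrm{Ad}((gh)^{-1})Q = Q$, equivalently $\mathrm{Ad}(h^{-1})\big(\mathrm{Ad}(g^{-1})Q\big) = Q$. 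Writing $Q' := \mathrm{Ad}(g^{-1})Q\in\mathfrak{g}'$, this amounts to showing that $Q'$ and $Q$ are conjugate \emph{within} $G'$. The point is that $Q$ and $Q'$ are obviously conjugate in $G$ (by $g$), and both are odd elements of $\mathfrak{g}'$ with vanishing self-bracket; I would invoke the canonical-form analysis of section~\ref{sec:cr_lsa}, which shows that, for the superalgebras under consideration, an odd nilpotent element is determined up to inner automorphism by its rank in the fundamental representation. Conjugation by $g\in G$ preserves that rank, so $Q$ and $Q'$ have the same rank as elements acting on the fundamental, hence the same rank as elements of $\mathfrak{g}'$ (the fundamental of $\mathfrak{g}$ restricts to a faithful $\mathfrak{g}'$-module, so ranks computed inside $\mathfrak{g}'$ agree with those computed in $\mathfrak{g}$), and therefore they lie in the same inner-automorphism orbit of $G'$. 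This supplies the required $h$.

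The main obstacle will be this conjugacy-within-$G'$ step: one has to be careful that "rank'' is an invariant that can be read off equally inside $\mathfrak{g}$ and inside $\mathfrak{g}'$, and that the inner automorphism realizing the equivalence can be taken in the connected group $G'$ rather than merely in $\mathrm{Aut}(\mathfrak{g}')$ — here I would use connectedness of $G'$ together with the fact that the reductions in section~\ref{sec:cr_lsa} are performed via $\mathrm{Ad}$ of explicit group elements $A,B$ (so the automorphisms are manifestly inner and connected to the identity). A secondary subtlety is the Grassmann-algebra bookkeeping: the statement is about supermanifolds, so strictly one works with the functor of points, testing against an arbitrary auxiliary Grassmann algebra $B$; but since every computation above is natural in $B$ and uses only $e^{\eta Q}=1+\eta Q$, the supermanifold identity~\eqref{eq:use_of_q_in_h} follows from the pointwise one by naturality. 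Finally I would note that the complementary statement about functions — that $\Img_{D(Q)}\mathcal{S}(G)$ is exactly the ideal of functions vanishing on $G^Q$, and correspondingly $\Img_Q\mathcal{S}(G/G')$ cuts out $(G/G')^Q$ — is what makes~\eqref{eq:use_of_q_in_h} the geometric shadow of the cohomological reduction, but that part is already asserted in the text preceding the lemma and need not be reproved here.
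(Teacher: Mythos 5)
Your first step (identifying the body of $(G/G')^Q$ with $\{\,gG' : \mathrm{Ad}(g^{-1})Q\in\mathfrak{g}'\,\}$) and the easy inclusion $G^Q/(G')^Q\subseteq (G/G')^Q$ are fine, and you have correctly isolated where the substance lies: given $\mathrm{Ad}(g^{-1})Q\in\mathfrak{g}'$ one must conjugate it back to $Q$ \emph{inside} $G'$. But the justification you give for that step does not work. Section 3.2 does not show that an odd nilpotent is determined up to inner automorphism by its total rank: for $\gl{M}{N}$ the canonical form depends on the pair $(k,l)$, and conjugation by even group elements preserves $k$ and $l$ separately; only the isomorphism class of $\Coh_Q(\mathfrak{g})$ depends on $k+l$. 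More importantly, that classification is carried out for $\mathfrak{g}$ of gl/sl/osp type, whereas what you need is a conjugacy statement in $G'$, and in all the paper's applications $\mathfrak{g}'$ is a direct sum of several such factors, so equality of ranks computed in the fundamental of $\mathfrak{g}$ says nothing about $G'$-conjugacy. Concretely, take $\mathfrak{g}=\gl{2}{2}$, $\mathfrak{g}'=\gl{1}{1}\oplus\gl{1}{1}$ embedded block-diagonally (blocks on the index sets $\{1,3\}$ and $\{2,4\}$), $Q=E_{13}$, and let $g$ be the even permutation matrix exchanging the two blocks. Then $Q'=\mathrm{Ad}(g^{-1})Q=E_{24}$ lies in $\mathfrak{g}'$ and has the same rank data as $Q$ in the fundamental of $\mathfrak{g}$, yet no $h\in G'$ satisfies $\mathrm{Ad}(h^{-1})Q'=Q$, because conjugation by block-diagonal elements preserves each summand. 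So the implication ``same rank in $\mathfrak{g}$ $\Rightarrow$ conjugate in $G'$'' is false, and with it your reverse inclusion. (The same example shows that the fixed-point body can have components away from the $G^Q$-orbit of $eG'$, so any correct argument must either restrict to that orbit/component or invoke extra hypotheses; your secondary worry about taking the conjugating automorphism in the connected group $G'$ is minor by comparison.)

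Note also that the paper argues along a genuinely different line: it introduces geodesic coordinates $e^vG'$ with $v\in\mathfrak{m}$, uses a Hopf--Rinow-type surjectivity of the exponential map to write every coset in this form, asserts that $(G/G')^Q$ is the image of the exponential map restricted to $\Ker_Q\mathfrak{m}$, and then deduces the lemma from transitivity of $G^Q$ on this image together with the stabilizer identification $(G')^Q=G^Q\cap G'$. Your reformulation as a conjugacy problem is natural, but to complete it you would need an actual statement about $G'$-orbits of odd nilpotents of $\mathfrak{g}'$ that are $G$-conjugate to $Q$, which is exactly what fails in the example above.
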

\begin{proof}
In the neighborhood of $eG^{\prime}$, where $e$ is the identity of $G$, the
distinct equivalence classes of $G/G^{\prime}$ can be parametrized as
\begin{equation}\label{eq:geod_sc}
  e^v G^{\prime}\ ,
\end{equation}
where $v\in\mathfrak{m}_{B,\bar{0},\dagger}$ is small enough.
If we denote by $v$ the coordinate of the point~\eqref{eq:geod_sc} then we get
the geodesic system of coordinates at $eG^{\prime}$.
Indeed, the coordinate space $\mathfrak{m}_{B,\bar{0},\dagger}$ can be
identified with the tangent space at the point $eG^{\prime}$ with coordinates
$v=0$
\begin{equation*}
  (\mathcal{L}(v) f)(0) \ =\  \frac{d}{dt} (e^{t v}\cdot f)(0)
  \big\vert_{t=0}\ =\
  \frac{d}{dt}f(-t v)\big\vert_{t=0}
  \ =\  -(v(f))(0)\ ,
\end{equation*}
where $v\in\mathfrak{m}_{B,\bar{0},\dagger}$ and $\mathcal{L}$ denotes the Lie derivative.

The exponential mapping
\begin{equation}\label{eq:exp_map}
 v\rightarrow e^v G^{\prime}
\end{equation}
can be extended to the whole tangent space
$\mathfrak{m}_{B,\bar{0},\dagger}$.
This extension is in general no longer injective, that is it
ceases to be a system of coordinates. However, assuming Hopf-Rinow
theorem can be generalized to supermanifolds \cite{hopfrinow}, the
map~\eqref{eq:exp_map} must be surjective , that is any group
element $g\in G$ can be represented in the form
\begin{equation*}
 g \ =\  e^v g^{\prime}
\end{equation*}
for some $v\in\mathfrak{m}_{B,\bar{0},\dagger}$ and $g^{\prime}\in G^{\prime}$.
Using this global representation, one can easily see that
$(G/G^{\prime})^Q$ is the image of exponential mapping~\eqref{eq:exp_map}
restricted to $\Ker_Q \mathfrak{m}_{B,\bar{0},\dagger}$.
If follows that $G^Q$  has a transitive action on $(G/G^{\prime})^Q$.
Its stabilizer at $eG^{\prime}\in (G/G^{\prime})^Q$ with respect to the left
action on $G^Q$ is $(G^{\prime})^Q=G^Q \cap G^{\prime}$.
This completes the proof of claim~\eqref{eq:use_of_q_in_h}.
\end{proof}

\begin{cor} Let $L(Q)$ denote the vector field corresponding to
the left action of $Q$. Then one has
\begin{equation}\label{eq:img_eq}
 \Img_{Q} \mathcal{S}(G/G') \ =\  \Img_{L(Q)} \mathcal{S}(G)^{G'} \ = \
 \Img_{D(Q)} \mathcal{S}(G)^{G'}\ =\ (\Img_{D(Q)} \mathcal{S}(G))^{G'}
 \end{equation}
\end{cor}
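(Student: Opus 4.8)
The plan is to prove the chain of equalities link by link, working throughout under the identification $\mathcal{S}(G/G')\simeq\mathcal{S}(G)^{G'}$ given by pull-back along the projection $G\to G/G'$. The first link is a matter of definition: under this identification the action of $Q\in\mathfrak{g}$ on functions on the homogeneous space $G/G'$ is exactly the action on $\mathcal{S}(G)^{G'}$ of the vector field $L(Q)$ generating the left translation $g\mapsto e^{\eta Q}g$, so passing to images gives $\Img_Q\mathcal{S}(G/G')=\Img_{L(Q)}\mathcal{S}(G)^{G'}$. For the second link, let $R(Q)$ be the field of the right translation $g\mapsto g\,e^{\eta Q}$, so that the adjoint field splits as $D(Q)=L(Q)-R(Q)$. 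Since $Q$ lies in the Lie algebra of $G'$, the element $e^{\eta Q}$ lies in $G'$ and therefore acts trivially on right-$G'$-invariant functions; hence $R(Q)$ annihilates $\mathcal{S}(G)^{G'}$, the operators $L(Q)$ and $D(Q)$ agree on it, and $\Img_{L(Q)}\mathcal{S}(G)^{G'}=\Img_{D(Q)}\mathcal{S}(G)^{G'}$.

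The third equality $\Img_{D(Q)}\mathcal{S}(G)^{G'}=(\Img_{D(Q)}\mathcal{S}(G))^{G'}$ carries the actual content. The inclusion ``$\subseteq$'' is immediate: because $Q\in\mathfrak{g}'$ the adjoint flow $g\mapsto e^{\eta Q}g\,e^{-\eta Q}$ permutes the right $G'$-cosets, so $D(Q)$ preserves $\mathcal{S}(G)^{G'}$, whence $\Img_{D(Q)}\mathcal{S}(G)^{G'}$ is a $G'$-invariant subspace of $\Img_{D(Q)}\mathcal{S}(G)$. For the reverse inclusion I would use the geometric descriptions of the images assembled just before the Corollary: by the first two links the left-hand side equals $\Img_Q\mathcal{S}(G/G')$, which is the ideal of smooth functions on $G/G'$ vanishing on the fixed locus $(G/G')^Q$; by the preceding Lemma, eq.~\eqref{eq:use_of_q_in_h}, this locus is $G^Q/(G')^Q$, so after pull-back to $G$ it is the space of right-$G'$-invariant functions vanishing on $G^Q$; and since $\Img_{D(Q)}\mathcal{S}(G)$ is exactly the set of smooth functions on $G$ vanishing on $G^Q$, intersecting with $\mathcal{S}(G)^{G'}$ reproduces $(\Img_{D(Q)}\mathcal{S}(G))^{G'}$. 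A self-contained alternative for ``$\supseteq$'', valid when $G'$ is compact, is to average a primitive: if $f=D(Q)g$ with $f$ right-$G'$-invariant, then $\bar g(x)=\int_{G'}g(xk)\,dk\in\mathcal{S}(G)^{G'}$ still satisfies $D(Q)\bar g=f$, because the substitution $k\mapsto e^{\eta Q}k\,e^{-\eta Q}$ inside the $G'$-integral --- legitimate since $Q\in\mathfrak{g}'$ preserves $G'$ together with its Haar measure --- shows that $D(Q)$ commutes with the averaging projector onto $\mathcal{S}(G)^{G'}$, so that $f=P_{G'}f=P_{G'}D(Q)g=D(Q)P_{G'}g=D(Q)\bar g$.

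The main obstacle is precisely this reverse inclusion, i.e.\ the production of a $G'$-invariant primitive of a $G'$-invariant $D(Q)$-exact function; in the geometric route its whole weight is transferred onto the two facts quoted above the Corollary, namely that the image of a homological odd vector field equals the vanishing ideal of its fixed locus. That is the one genuinely non-formal ingredient, and it is established by bringing $Q$ into the super local normal form $Q=\sum_i\theta_i\,\partial_{t_i}$ near the fixed set --- where ``$\Img_Q=$ vanishing ideal'' reduces to an elementary integration in the even coordinates $t_i$ --- and using transitivity of the $G$-action to keep the rank of $Q$ constant. As elsewhere in this section one works with the appropriate Grassmann envelopes and real forms and assumes surjectivity of the exponential map (a super Hopf--Rinow hypothesis), and the averaging variant additionally requires $G'$ compact, or at least unimodular with an invariant integration, in keeping with the regularizations already assumed for the path integral.
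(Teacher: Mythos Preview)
Your proof is correct and follows essentially the same route as the paper: the first two equalities are handled identically (identification, and $L(Q)=D(Q)$ on $\mathcal{S}(G)^{G'}$ because $Q\in\mathfrak{g}'$), and for the third you use, as the paper does, the characterization of $\Img_{D(Q)}\mathcal{S}(G)$ as the vanishing ideal of $G^Q$ together with the Lemma $(G/G')^Q=G^Q/(G')^Q$ to identify both sides with the functions on $G/G'$ vanishing on $(G/G')^Q$. Your averaging argument for the reverse inclusion (commuting $D(Q)$ with the $G'$-average via $k\mapsto e^{\eta Q}k e^{-\eta Q}$) is a genuine alternative that the paper does not give; it bypasses the vanishing-ideal input at the cost of a compactness/unimodularity hypothesis on $G'$, whereas the paper's geometric argument trades that hypothesis for the (asserted but not proved there) identification of $\Img$ with the vanishing ideal, which you correctly flag as the real analytic content.
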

\begin{proof}
The first equality results from the identification
$\mathcal{S}(G/G')=\mathcal{S}(G)^{G'}$ while the second equality
is a consequence of $Q\in\mathfrak{g}'$. To prove the last
equality notice that $\Img_{D(Q)} \mathcal{S}(G)$ is composed of
functions on $G$ vanishing on  $G^Q$. Then $(\Img_{D(Q)}
\mathcal{S}(G))^{G'}$ becomes the space of functions on $G/G'$
vanishing on the submanifold $G^Q/G'$. Notice that
$G^Q/G'= G^Q/(G')^Q$, because both supermanifolds are
$G^Q$-transitive and have the same stabilizer $(G')^Q = G^Q\cap
G'$. Therefore, according to eq.~\eqref{eq:use_of_q_in_h},
$(\Img_{D(Q)} \mathcal{S}(G))^{G'}$ can be seen as the space of
functions on $G/G'$ that vanish on $(G/G')^Q$. This, however,
coincides with the definition of $\Img_Q\mathcal{S}(G/G')$.
\end{proof}

We have analogous obvious equalities for the kernel of $Q$
\begin{equation}\label{eq:ker_eq}
  \Ker_{Q} \mathcal{S}(G/G') \ = \ \Ker_{L(Q)} \mathcal{S}(G)^{G'} \ = \
  \Ker_{D(Q)} \mathcal{S}(G)^{G'}\ = \ (\Ker_{D(Q)} \mathcal{S}(G))^{G'}\ .
\end{equation}
Combining eqs.~(\ref{eq:img_eq}, \ref{eq:ker_eq}) we get the following
prescription for computing the cohomology
\begin{equation}
 \Coh_Q(\mathcal{S}(G/G')) \ =\ (\Coh_{D(Q)}(\mathcal{S}(G)))^{G'}\ .
\end{equation}
Let us now concentrate on computing $\Coh_{D(Q)}(\mathcal{S}(G))$.

The image of a function $f$ under the projection map
$\pi:\mathcal{S}(G)\rightarrow \mathcal{S}(G)/\Img_{D(Q)}
\mathcal{S}(G)$ given by
\begin{equation}\label{eq:proj_map_diff_1}
 \pi(f)\ =\ f+\Img_{D(Q)} \mathcal{S}(G)
\end{equation}
is the equivalence class of functions which have the same restriction on
$G^Q$ as $f$,
that is
\begin{equation}\label{eq:proj_map_diff_2}
 \pi(f) \ =\  f\vert_{G^Q}\ .
\end{equation}
In particular any function whose restriction to $G^Q$ vanishes must
be in the image of $Q$.

Notice that the left and the right $G$-actions on $\mathcal{S}(G)$
induce corresponding left and right $G^Q$-action on the quotient
space $\mathcal{S}(G)/\Img_{D(Q)} \mathcal{S}(G)$
\begin{equation*}
 L(X) \pi (f) \ :=\  \pi (L(X)f),\qquad R(X) \pi (f) \ : =\  \pi(R(X)f) \ .
\end{equation*}
\begin{lemma}
The following isomorphism of $H$-modules and commutative algebras holds
\begin{equation}\label{eq:bla_not_3}
 \Coh_{D(Q)}(\mathcal{S}(G))^{G'}\ \simeq\  \mathcal{S}(H/H')\ .
\end{equation}
\end{lemma}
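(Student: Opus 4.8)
The strategy is to reduce the claim to a local Poincar\'e-lemma computation near the base point $eG'\in G/G'$ and then to globalise. Most of the preliminary work is already available: by eqs.~\eqref{eq:img_eq} and~\eqref{eq:ker_eq} one has $\Coh_{D(Q)}(\calS(G))^{G'}=\Coh_Q(\calS(G/G'))$, and the canonical projection onto cohomology is the restriction of a $Q$-closed function on $G/G'$ to the fixed locus $(G/G')^Q$, which by eq.~\eqref{eq:use_of_q_in_h} equals $G^Q/(G')^Q$. Hence $\Coh_Q(\calS(G/G'))$ is realised as a subalgebra of $\calS\bigl(G^Q/(G')^Q\bigr)$, carrying an $H$-action because $\h\subset\Ker_Q\g$ commutes with $Q$ and so preserves $\Ker_Q$ and $\Img_Q$. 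It remains to show that this subalgebra is precisely $\calS(H/H')$, embedded in $\calS\bigl(G^Q/(G')^Q\bigr)$ through the inclusion $H/H'\hookrightarrow G^Q/(G')^Q$ of homogeneous spaces, which exists since $\h\subset\h\oplus\e=\mathrm{Lie}\,G^Q$ and $\h'\subset\mathrm{Lie}\,(G')^Q$.

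For the local step I would introduce the exponential chart at $eG'$ attached to the $\h'$-module splitting $\m\simeq\mathfrak{n}\oplus\mathfrak{p}\oplus\mathfrak{q}$ of eq.~\eqref{eq:eq3}, with coordinates $(n,p,q)$. Since $\mathrm{ad}\,Q$ annihilates $\mathfrak{n}$ and $\mathfrak{p}=\Img_Q\m$ and restricts to an isomorphism $\mathfrak{q}\xrightarrow{\ \sim\ }\mathfrak{p}$, one checks that locally $(G/G')^Q=\{q=0\}$ and $H/H'=\{p=q=0\}$, consistently with $T_{eG'}(H/H')\simeq\mathfrak{n}\simeq\h/\h'$ from eq.~\eqref{eq:eq_ru}. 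Because $Q\in\g'$, the zeroth-order term of the vector field implementing the $Q$-action on $\calS(G/G')$ vanishes in this chart, and its linear part is the Koszul-type differential $\delta_0=\sum_j q_j\,\partial/\partial p_j$ associated with the pairing $\mathrm{ad}\,Q\colon\mathfrak{q}\xrightarrow{\sim}\mathfrak{p}$, the remaining terms raising the polynomial degree in $(p,q)$. A direct computation --- the Poincar\'e lemma for $\delta_0$, in which every $(\mathfrak{p},\mathfrak{q})$-pair of coordinates (one even, one odd) contributes only a copy of $\mathbb{C}$ to cohomology --- identifies $\Coh_{\delta_0}$ of the smooth functions in $(n,p,q)$ with the smooth functions of $n$ alone, a $\delta_0$-closed function restricting to $\{q=0\}$ exactly as a function of $n$. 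To upgrade this to a statement about the full differential $D(Q)$ I would filter by $(p,q)$-degree and use the associated spectral sequence, whose first page is governed by $\delta_0$, or --- more cleanly --- invoke a normal-form result: since $D(Q)^2=0$ identically and its linear part $\delta_0$ is acyclic transverse to $\{q=0\}$, a local coordinate change straightens $D(Q)$ into $\delta_0$, after which the Poincar\'e lemma applies verbatim. Either way one obtains $\Coh_Q(\calS(G/G'))|_{\mathrm{chart}}\simeq\calS_{\mathrm{loc}}(H/H')$, compatibly with restriction to $(G/G')^Q$.

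To globalise, note that away from $(G/G')^Q$ the generator $Q$ acts locally like $\partial/\partial\theta$ for an odd coordinate $\theta$, hence with vanishing cohomology; therefore the cohomology sheaf of $(\calS(G/G'),Q)$ is supported on $(G/G')^Q=G^Q/(G')^Q$ and, by the local step, coincides there with the structure sheaf of $H/H'$. A partition-of-unity argument (smooth functions form a fine sheaf) then gives $\Coh_Q(\calS(G/G'))=\calS(H/H')$; the identification is $H$-equivariant because $\h$ commutes with $Q$, and is an isomorphism of commutative algebras since the cohomology product is induced by pointwise multiplication, which is compatible with restriction. Combined with $\Coh_{D(Q)}(\calS(G))^{G'}=\Coh_Q(\calS(G/G'))$ this yields eq.~\eqref{eq:bla_not_3}.

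I expect the decisive difficulty to be the passage from the model differential $\delta_0$ to the genuine $D(Q)$: on smooth --- as opposed to polynomial or formal --- functions the filtration by $(p,q)$-degree is not bounded, so one must either check that the contracting homotopy for $\delta_0$ still converges once the higher-order terms are switched on (homological perturbation, exploiting that those terms strictly raise the degree), or prove the normal form for the odd nilpotent vector field $D(Q)$ through a Moser-type deformation argument on the supermanifold $G/G'$. Everything else --- the algebraic splittings, the identity $(G/G')^Q=G^Q/(G')^Q$, and the $H$-equivariance --- is either already in the excerpt or routine.
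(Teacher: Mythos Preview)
Your approach is correct but takes a substantially different route from the paper's. The paper's proof is a two-line algebraic observation: for $f\in\Ker_{L(Q)}\calS(G)^{G'}$ and any $X\in\g$ one has
\[
L([Q,X])\,\pi(f)=\pi\bigl(L(Q)L(X)f\bigr)=\pi\bigl(D(Q)L(X)f\bigr)=0,
\]
so every cohomology class, viewed via $\pi$ as a function on $G^Q/(G')^Q$, is left invariant under $\e=\Img_Q\g$. Letting $N\subset G^Q$ be the normal subgroup with Lie algebra $\e$, one has $H=G^Q/N$ and hence $N\backslash G^Q/(G')^Q=H/H'$, which identifies $\Coh_{D(Q)}(\calS(G))^{G'}$ with functions on $H/H'$. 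No coordinates, no homotopy operators, no sheaf argument.

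Your local Poincar\'e-lemma route is sound, and in fact simpler than you fear: because $Q\in\g'$, the left action satisfies $e^{\eta Q}\cdot e^vG'=e^{\eta Q}e^ve^{-\eta Q}G'=e^{\Ad_{e^{\eta Q}}v}G'$, and since $\eta^2=0$ this is exactly $e^{v+\eta[Q,v]}G'$ with $[Q,v]\in\m$. So in exponential coordinates the vector field is \emph{exactly} the linear Koszul differential $\delta_0$---there are no higher-order terms, and the homological-perturbation or normal-form step you flag as the main difficulty simply does not arise. What your approach buys is an explicit identification of both kernel and image locally, hence surjectivity onto $\calS(H/H')$ comes out directly; the paper's argument is a bit looser on this point and effectively defers surjectivity to the injection map $i$ constructed in the next subsection. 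Conversely, the paper's argument is globally valid from the start and avoids the partition-of-unity patching.
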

\begin{proof}
 If $X\in\mathfrak{g}$ and $f\in\Ker_{L(Q)}\mathcal{S}(G)^{G'}$, then
$$L([Q,X])\pi (f)\ =\  \pi(L[Q,X] f)\ =\  \pi(L(Q)L(X)f)\ =\  \pi(D(Q)L(X)f)\ =\
0\ ,$$
because the left and right $\mathfrak{g}$ actions on $\mathcal{S}(G)$ commute and $\Ker \pi = \Img_{D(Q)}\mathcal{S}(G)$. This shows
that the space of functions $\Coh_{D(Q)}(\mathcal{S}(G))^{G'}$ is
left invariant with respect to the action of $\mathfrak{e}$.
Denote by $N$ the subgroup of $G$ with the Lie superalgebra
$\mathfrak{e}$. The latter being an ideal of $\Ker_Q
\mathfrak{g}$, $N$ is a normal subgroup of $G^Q$ with $H =
N\backslash  G^Q$. Then eq.~\eqref{eq:bla_not_3} claims that
$\Coh_{D(Q)}(\mathcal{S}(G))^{G'}$ is a space of functions on
$N\backslash G^Q/G' = H/G'=H/H'$. The last equality comes from the
fact that both $H/G'$ and $H/H'$ are $H$-transitive and have the
same stabilizer $H'= G'\cap H$.
\end{proof}

In conclusion we wee that the cohomology of a smooth function on
$G/G'$ is computed by restricting it to $H/H'\subset G/G'$. We
denote this restriction map by $\rho$.

\subsection{Reduction of smooth tensor forms on $G/G'$}
\label{sec:red_smooth_forms}
Let $T_k(G/G')$ be the space of smooth tensor forms of rank $k$
on $G/G'$.
We claim that eq.~\eqref{eq:cental_statement} can be generalized to
\begin{equation}\label{eq:tensors_coh}
 \Coh_Q(T_k(G/G'))\ \simeq\  T_k(H/H')\ ,
\end{equation}
where $T_k(H/H')$ is the space of smooth tensor forms of rank $k$ on $H/H'$.
We shall only give a local argument.
Introducing the geodesic coordinates~\eqref{eq:geod_sc}, one can perform
the following identification in the neighborhood of the point $eG'\in G/G'$
\begin{equation*}
 T_k(G/G')\ \simeq \ \mathcal{S}(G/G')\otimes \mathfrak{m}^{\otimes k}\ .
\end{equation*}
This local trivialization extends to an isomorphism of $G'$-modules.
Using the property c) of lemma~\ref{eq:3_prop}, we get
\begin{equation}\label{eq:make_one_ref}
 \Coh_Q(T_k(G/G'))\ \simeq \ \Coh_Q(\mathcal{S}(G/G'))\otimes
 \Coh_Q(\mathfrak{m})^{\otimes k}\simeq \mathcal{S}(H/H')\otimes
 \mathfrak{n}^{\otimes k}\simeq T_k(H/H')\ .
\end{equation}
Most probably, one can give a global argument for the
claim~\eqref{eq:tensors_coh} by introducing the frame bundle
\begin{equation*}
 T_k(G/G')\simeq (\mathcal{S}(G)\otimes F(G)^{\otimes k})^{G'}\ ,
\end{equation*}
where $F(G)$ is the moving frame attached to every point of $G$, which
is built out of the components of the Maurer-Cartan form.

In conclusion, the cohomology of a tensor form on $G/G'$ is
computed, as can be seen from eq.~\eqref{eq:make_one_ref}, by restricting it i) to the submanifold $H/H'$ and ii) to
the tensor space of $H/H'$. Step ii) is equivalent to throwing out
all components of the tensor not lying in the tensor space of
$H/H'$ seen as a submanifold of $G/G'$. We denote this restriction
map by $\rho$ again.


\subsection{Reduction of $L_2(G/G^{\prime})$}
\label{sec:crl2}

We want to refine~\eqref{eq:cental_statement}
and  show that the elements of $\Coh_Q(L_2(G/G^{\prime}))$ are square integrable
with respect to some $H$-invariant measure on $H/H^{\prime}$, that is
\begin{equation}\label{eq:blabla_i_am_tired}
 \Coh_Q(L_2(G/G^{\prime}))\ \simeq\  L_2(H/H^{\prime})\ .
\end{equation}
In order to do so, introduce the geodesic coordinates $v$ of
eq.~\eqref{eq:geod_sc}. Let $v_{\mathfrak{n}}$, $v_{\mathfrak{p}}$
and $v_{\mathfrak{q}}$ denote the projection of $v$ onto the real
Grassmann envelope of the direct summand $\mathfrak{n}$,
$\mathfrak{p}$ and, respectively, $\mathfrak{q}$ in
eq.~\eqref{eq:eq3}. We then embed $\mathcal{S}(H/H^{\prime})$ into
$\mathcal{S}(G/G^{\prime})$ by means of the injection map
\begin{equation}\label{eq:inj_map_diff}
 i(f)(v) \ = \ f(v_{\mathfrak{n}})e^{\alpha (v_{\mathfrak{p}},v_{\mathfrak{q}})}\ ,
\end{equation}
where $v$ is small enough and $\alpha$ is, for the moment, an
arbitrary number. Notice that eq.~\eqref{eq:inj_map_diff} defines
the function $i(f)$ globally. Indeed, the
definition~\eqref{eq:inj_map_diff} allows to compute the action of
the enveloping Lie superalgebra $\mathcal{U}(\mathfrak{g})$ on
$i(f)$. The latter can be extended to the action of the group $G$,
whose knowledge is enough to define the values of $i(f)$ at any
point of $G/G^{\prime}$.

The most important property of the injection map~\eqref{eq:inj_map_diff} is
\begin{equation}\label{eq:important_property}
 \pi\circ i \ = \ \id\ ,
\end{equation}
where $\pi$ is the projection of eqs.~(\ref{eq:proj_map_diff_1}, \ref{eq:proj_map_diff_2}).
As a consequence, any element of $\Ker_Q \mathcal{S}(G/G^{\prime})$ can be
represented in the form
\begin{equation}\label{eq:fundamental_eq_diff}
 i(f)+\mathcal{L}(Q) h \ ,
\end{equation}
where $\mathcal{L}(Q)$ denotes the Lie derivative with respect to $Q$.

We now prove~\eqref{eq:blabla_i_am_tired} by showing that for a proper
choice of $\alpha$ in eq.~\eqref{eq:inj_map_diff} one has
\begin{equation}\label{eq:fin_res_1}
 \langle i(f_1),i(f_2)\rangle_{G/G^{\prime}} \ = \ \langle f_1,
f_2\rangle_{H/H^{\prime}} \ .
\end{equation}
The equation should be understood as follows: i) the existence of one side implies the existence of the other side and ii) for a
$G$-invariant scalar product
$\langle\phantom{x},\phantom{x}\rangle_{G/G^{\prime}}$ on
$L_2(G/G^{\prime})$ induced by the $G$-invariant measure on
$G/G^{\prime}$ there is a corresponding $H$-invariant scalar
product $\langle\phantom{x},\phantom{x}\rangle_{H/H^{\prime}}$ on
$L_2(H/H^{\prime})$ induced by the $H$-invariant measure on
$H/H^{\prime}$.

Indeed, let the measure on $G/G^{\prime}$ be given locally by
$d\mu_G(v)=w(v)dv$. Suppose $i(f)$ is $L_2$ normalizable. Then its norm can be
written in the form
\begin{equation*}
 \int_{H/H^{\prime}} d\mu_H f^2\ ,
\end{equation*}
where $d\mu_H$ is a measure on $H/H^{\prime}$ locally defined by a weight
function $w'(v_{\mathfrak{n}})$ obtained by integrating
\begin{equation*}
 w(v)e^{2\alpha(v_{\mathfrak{p}},v_{\mathfrak{q}})}\ ,
\end{equation*}
with respect to the coordinates $v_{\mathfrak{p}}$ and $v_{\mathfrak{q}}$.
Notice that there is always a choice of $\alpha$ such that $w'$ exists even for
non-compact homogeneous spaces $G/G^{\prime}$.
Of course, in order to perform the integration yielding the explicit form of
$w'$ one must work with an atlas of $G/G^{\prime}$.
However, the only thing that matters to us is its $H$-invariance or,
equivalently, the $H$-invariance of the scalar product $\langle \phantom{x},
\phantom{x}\rangle_{H/H'}$ associated to it by eq.~\eqref{eq:fin_res_1}.
We thus check%
\begin{equation}\label{eq:h-invariance}
 \langle i(\mathcal{L}(X) f_1),i(f_2)\rangle_{G/G^{\prime}} + \langle i(
f_1),i(\mathcal{L}(X)f_2)\rangle_{G/G^{\prime}}\ =\ 0,\qquad
X\in\mathfrak{h}\ .
\end{equation}
Notice that $(v_{\mathfrak{p}},v_{\mathfrak{q}})$ is $Q$-exact
because its restriction to $v_{\mathfrak{q}}=0$ vanishes.
Therefore $\mathcal{L}(X)(v_{\mathfrak{p}},v_{\mathfrak{q}})$ is
also $Q$-exact, because $[Q,X]=0$. Finally,
\begin{equation}\label{eq:hopefully_the_last_one}
 (\mathcal{L}(X) i(f))(v) - i(\mathcal{L}(X) f)(v) \ = \
 f(v_{\mathfrak{n}})\alpha e^{\alpha(v_{\mathfrak{p}},v_{\mathfrak{q}})}
 \mathcal{L}(X)(v_{\mathfrak{p}},v_{\mathfrak{q}})
\end{equation}
is $Q$-exact as well, because
$f(v_\mathfrak{n})e^{\alpha(v_{\mathfrak{p}}, v_{\mathfrak{q}})}$
is $Q$-invariant. We then use the exactness of the
expression~\eqref{eq:hopefully_the_last_one} to commute the Lie
derivative $\mathcal{L}(X)$ with the injection $i$ in
eq.~\eqref{eq:h-invariance}.

We conclude this section by noticing that eq.~\eqref{eq:fin_res_1}
can be written in an equivalent way as
\begin{equation}\label{eq:localization}
 \langle f_1,f_2\rangle_{G/G^{\prime}}\  = \ \langle
\rho(f_1),\rho(f_2)\rangle_{H/H^{\prime}}\ ,
\end{equation}
where  $f_1,f_2\in\Ker_QL_2(G/G^{\prime})$.
This is the localization phenomenon.

\section{Cohomological reduction in the field theory}
\label{reductionfield}

We are now prepared to revisit the sigma models on $G/G'$. We have
shown in sec.~\ref{sec:observables} how the local observables of
the sigma model on $G/G'$ can be constructed from functions on
$L_2(G/G')$ and (some well behaved subspace of the space of
smooth) tensor forms on $G/G'$. The results of
sec.~(\ref{sec:smooth_func}--\ref{sec:crl2}) straightforwardly
imply  that the cohomological reduction of the space of local
observables in the sigma model on $G/G'$ coincides precisely with
the space of local observables in the sigma model on $H/H'$, that
is
\begin{equation}
 \Coh_Q( \mathcal{F}_{G/G'})\ \simeq \  \mathcal{F}_{H/H'}
\end{equation}

We now look at
correlation functions of local fields $\cal{O}$ that are $Q$-invariant. As the
results of the previous section suggest, we shall demonstrate that
any correlation function of such fields can be computed in the
$H/H'$ coset superspace theory.

First we need to compute the cohomological reduction of the action
$S_{G/G'}$ associated to the Lagrangian in
eq.~\eqref{eq:lag_origin}. Since the Lagrangian is entirely fixed
by a $G$-invariant metric and an exact $G$-invariant 2-form, we
can apply the results of sec.~\ref{sec:red_smooth_forms} in order
to compute their cohomology class. The classes of the two tensor
forms are computed by restricting them to the points of the
submanifold $H/H'$ and to its tensor space respectively. As a
result we obviously get an $H$-invariant metric and an exact
$H$-invariant 2-form on $H/H'$. Employing the restriction map
$\rho$ of sec.~(\ref{sec:smooth_func},\ref{sec:red_smooth_forms}),
we conclude that
\begin{equation}\label{eq:red_act}
 \rho(S_{G/G'})\ =\ S_{H/H'}
\end{equation}
is an action for the sigma model on $H/H'$ with a similar kinetic
term and $B$-field structure as $S_{G/G'}$.
The pullback of eq.~\eqref{eq:red_act} takes a more familiar form to usual cohomological calculations in field theory
\begin{equation*}
 S_{G/G'} = S_{H/H'} + \mathcal{L}(Q) R\ ,
\end{equation*}
where $\mathcal{L}(Q)$ denotes the Lie derivative with respect to $Q$ and $R$ is some residual functional, obviously non $G$-invariant.
The possibility of constructing $G$-invariant terms $\mathcal{L}(Q) R$ out of non $G$-invariant terms $R$ is a special feature of the supergroup symmetry.
According to one of the main ideas behind cohomological reduction, the $Q$-exact term in the action does not contribute to the calculation of correlation functions of $Q$-invariant local fields.

To make things more precize, notice that the localization formula~\eqref{eq:localization}
for the computation of the scalar product of $Q$-invariant
functions can be generalized to the integral of any $Q$-invariant
object. Therefore, we trivially obtain from
eq.~\eqref{eq:def_corr_func}
\beqa\label{eq:final_corr}
\vac{\prod_{i=1}^N\calO_i(x_i)}_{G/G^{\prime}}&=&
\int_{ \mathcal {\mathcal{H} } } \, \text{d}\mu_H e^{-\rho(S_{G/G'})}
\prod_{i=1}^N\rho\left(\calO_i\right)(x_i)\nonumber\\[2mm] &=&\vac { \prod_ { i=1 }
^N\rho\left(\calO_i\right)(x_i)}_{H/H^{\prime}}\ . \eeqa
where we have used eq.~\eqref{eq:red_act}. Consequently, the
subsector of the sigma model on $G/G'$ which we obtain through
cohomological reduction is composed of the localized observables
$\rho(\mathcal{O}_i)$. Finally, using the central
statement~\eqref{eq:final_corr}, we conclude that this subsector
is exactly identified with the local observables of the sigma
model on $H/H'$.

\section{Applications}
\label{reductionexamples}

In the first subsection we discuss applications of cohomological
reduction to conformal field theory. In the second subsection we
present a general treatment of sigma models on supercoset spaces
$G/G^{\mathbb{Z}_2}$ defined by a degree two automorphism, that is
on symmetric superspaces. The last subsection deals with some
specific examples involving automorphisms of degree four.

\subsection{Conformal field theory}
\label{sec:cft}
The cohomological reduction we have described in the previous two
subsections allows us to identify certain simple subsectors of the
parent theory in which all correlation functions can be computed
explicitly through the reduced model. The latter is often much
simpler than the original theory. In fact, we shall find many
examples below in which the subsector is a free or even
topological field theory. The existence of such simple subsectors
may signal very special features of the parent model. In
particular, it can imply its scale invariance.

In order to make a more precise statement we need a bit of
preparation. Let us recall that the coset $G/G'$ gives rise to a
family of sigma models which is parametrized by the metric $\tG$
and the $B$-field $\tB$. $G$-invariance of the action determines the
two background fields up to a finite number of parameters. Upon
quantization, these parameters may be renormalized. This
renormalization of $\tG$ and $\tB$ can affect the properties of
our theory and in particular of its stress tensor.

Let us now consider the quantized $G/G'$ model that comes with
some fixed choice of $\tG$ and $\tB$. The associated stress tensor
$T_G$ is conserved and symmetric. On the other hand, the trace of
$T_G$ may be non-zero due to quantum effects. The components of
$T_G$ are $G$-invariant, i.e.\ they commute with all generators $X
\in {\mathfrak{g}}$. In general, $T_G$ can be decomposed into a sum
$T_G = \sum_i T^{(i)}_G$ of terms where each of the summands
$T^{(i)}_G$ belongs to a single indecomposable representation of
${\mathfrak{g}}$. We say that $T_G$ is a true $G$-invariant if every summand $T_G^{(i)}$ is a direct summand. This must be distinguished from more generic cases for
which some of the summands $T^{(i)}_G$, although transforming in
the trivial representation of $\mathfrak{g}$, are coupled to other
fields through the action of a nilpotent symmetry generator $N$
from the center of the enveloping Lie superalgebra
$\mathcal{U}(\mathfrak{g})$. In this case, $T^{(i)}_G = N
t^{(i)}_G$ for some field $t^{(i)}_G$, which is called a
logarithmic partner of $T^{(i)}_G$.

Let us now assume that the tensor $T_G$ is a true $G$-invariant in
the sense we have described above. Suppose furthermore that the
theory contains a \emph{conformal}  subsector $H/H'$ with a
non-vanishing stress tensor $T_H$. According to our assumption,
$T_H$ is conserved, symmetric and traceless. Consequently, the
stress tensor of the original theory must be conserved, symmetric
and traceless up to some $Q$-exact terms. Since we assumed $T_G$
to be a true invariant, though, non of its components --- and in
particular the trace of $T_G$ --- can be obtained by acting with
an element of $\mathcal{U}(\mathfrak{g})$ on some other fields.
Hence, $T_G$ must be traceless and hence the $G/G'$ model is
conformal.

Let us stress again that our assumption on $T_G$ to be a true
invariant is rather strong. We are not prepared to state precise
conditions under which this assumption is actually satisfied in
general. However, when the superspaces $G/G'$ have at most one
degree of freedom in the choice of $\mathsf{G}$ and $\mathsf{B}$
one can get a simple constraint for the conformality of the parent
theory from the conformality of the cohomological subsector
theory: the sigma model $G/G'$ is conformal if $H/H'$ is conformal
and its central charge is non-zero. Indeed, in this case
$\mathsf{G}$ and $\mathsf{B}$ is either  proportional to i) a
single $\mathfrak{g}$ true invariant or to ii) a single invariant
socle of a $\mathfrak{g}$-indecomposable module. If $H/H'$ is the
conformally invariant maximal cohomological reduction with a
non-zero central charge, then $T_G$ cannot be an invariant socle.
Otherwise we would get a contradiction, because its 2-point
function would vanish and the 2-point function of $T_G$ must
coincide with the 2-point function of $T_H$. The latter, however,
cannot vanish because the central charge of the conformal $H/H'$
sigma model is non-zero.


\subsection{Sigma models on symmetric superspaces}
\begin{figure}[t]
\centerline{\xymatrix{ &\ar@{.>}[d] &\ar@{.>}[d] & \\ \ar@{.>}[r]
& \frac{\gl{M+m}{N+n}}{\gl{M}{N}\oplus \gl{m}{n}} \ar[r] \ar[d]&
\frac{\gl{M+m-1}{N+n-1}}{\gl{M-1}{N-1}\oplus \gl{m}{n}} \ar@{.>}[r] \ar[d]& \\
\ar@{.>}[r]&  \frac{\gl{M+m-1}{N+n-1}}{\gl{M}{N}\oplus
\gl{m-1}{n-1}} \ar[r] \ar@{.>}[d]&
\frac{\gl{M+m-2}{N+n-2}}{\gl{M-1}{N-1}\oplus \gl{m-1}{n-1}}
\ar@{.>}[r] \ar@{.>}[d]& \\ & & & }} \caption{\small{Possible
cohomological reductions of
$\gl{M+m}{N+n}/\gl{M}{N}\oplus\gl{m}{n} $.} }
\end{figure}
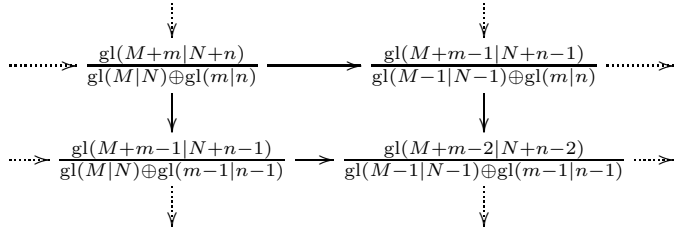

In this section, we want to present a classification of the
cohomological reductions of $\mathbb{Z}_2$ cosets, i.e. of
symmetric superspaces. These supermanifolds $G/G^{\prime}$ have
the property that $G^{\prime}$ is a direct product of supergroups
of which at most two are simple. For each simple factor whose
superalgebra contains nilpotent fermionic elements, we can perform
the cohomological reduction. Reductions performed with $Q$
operators that come from different simple factors commute with
each other. As an example, consider the coset space
$\mathfrak{g}/\mathfrak{g}^{\prime}=\gl{M+m}{N+n}/\gl{M}{N}\oplus
\gl{m}{n} $. The denominator has two simple factors, so that we
can reduce in two ways as outlined in figure $1$.

In table $2$ below, we describe the different cohomological sectors of
all possible sigma models on symmetric superspaces. We only write
down the complex case, but different reality conditions can then
easily be taken into consideration.

\begin{table}[h!]
\label{classification} \centerline{
\begin{tabular}{cc p{4cm}}
$\mathcal{R}$ & $\mathcal{M}$ &$\ \,$ Comments\\
\hline & & \\
$\frac{\psl{1}\oplus \psl{1}}{\psl{1}}$ &
$\frac{\psl{1+\alpha}\oplus \psl{1+\alpha}}{\psl{1+\alpha}}$
&$\ \,$ C  \vspace{5pt}\\
$\frac{\gl{1}{1}\oplus \gl{1}{1}}{\gl{1}{1}}$ &
$\frac{\gl{1+\alpha}{1+\alpha}\oplus
\gl{1+\alpha}{1+\alpha}}{\gl{1+\alpha}{1+\alpha}}$
&$\ \,$ T \vspace{5pt} \\
$\frac{\bsgl{R}\oplus \bsgl{R}}{\bsgl{R}}$ &
$\frac{\sgl{R+\alpha}{\alpha}
\oplus \sgl{R+\alpha}{\alpha}}{\sgl{R+\alpha}{\alpha}}$ & \vspace{5pt} \\
$\frac{\bgl{R+S}}{\bgl{R}\oplus \bgl{S}}$ &
$\frac{\gl{R+S+\alpha+\beta}{\alpha+\beta}}{\gl{R+\alpha}{
\alpha}\oplus
\gl{S+\beta}{\beta}}$ & $\ \,$ T for $R=0$ or $S=0$\vspace{5pt} \\
$\frac{\gl{R}{S}}{\bgl{R}\oplus \bgl{S}}$ &
$\frac{\gl{R+\alpha+\beta}{S+\alpha+\beta}}{\gl{R+\alpha}{\alpha}\oplus
\gl{\beta}{S+\beta}}$& $\begin{array}{l} \text{ C for
$R=S=1$}\\\text{ T for
$R=0$ or $S=0$}\end{array}$\vspace{5pt}\\
$\frac{\psl{1}\oplus\psl{1}}{\psl{1}}$ &
$\frac{\psl{2\alpha}}{\osp{2\alpha}{2\alpha}}$ & $\ \,$ C
\vspace{5pt}\\
$\frac{\gl{2}{2}}{\osp{2}{2}}$ &
$\frac{\gl{2+2\alpha}{2+2\alpha}}{\osp{2+2\alpha}{2+2\alpha}}$ &
$\ \,$ T
\vspace{5pt}\\
$\frac{\bsgl{R}}{\bso{R}}$ &
$\frac{\sgl{R+2\alpha}{2\alpha}}{\osp{R+2\alpha}{2\alpha}}$ & $\
\,$
T for $R=1$ \vspace{5pt}\\
$\frac{\bsgl{2R}}{\bsp{2R}}$ &
$\frac{\sgl{2\alpha}{2R+2\alpha}}{\osp{2\alpha}{2R+2\alpha}}$ & \vspace{5pt}\\
$\frac{\sgl{1}{2R}}{\osp{1}{2R}}$ &
$\frac{\sgl{1+2\alpha}{2R+2\alpha}}{\osp{1+2\alpha}{2R+2\alpha}}$&
\vspace{5pt}\\
$\frac{\bso{R}\oplus \bso{R}}{\bso{R}}$ &
$\frac{\osp{R+2\alpha}{2\alpha}\oplus
\osp{R+2\alpha}{2\alpha}}{\osp{R+2\alpha}{2\alpha}}$ &
$\begin{array}{l} \text{ C for $R=2$} \\ \text{ T for $R=0$ or
$R=1$}\end{array}$\vspace{5pt}\\
$\frac{\bsp{2R}\oplus \bsp{2R}}{\bsp{2R}}$ &
$\frac{\osp{2\alpha}{2R+2\alpha}\oplus
\osp{2\alpha}{2R+2\alpha}}{\osp{2\alpha}{2R+2\alpha}}$ & $\ \,$ T
for $R=0$ \vspace{5pt}\\
$\frac{\osp{1}{2R}\oplus \osp{1}{2R}}{\osp{1}{2R}}$ &
$\frac{\osp{1+2\alpha}{2R+2\alpha}\oplus
\osp{1+2\alpha}{2R+2\alpha}}{\osp{1+2\alpha}{2R+2\alpha}}$ & \vspace{5pt}\\
$\frac{\bso{R+S}}{\bso{R}\oplus \bso{S}}$ &
$\frac{\osp{R+S+2\alpha+2\beta}{2\alpha+2\beta}}{\osp{R+2\alpha}{2\alpha}\oplus
\osp{S+2\beta}{2\beta}}$ & $\begin{array}{l} \text{ C for
$R=S=1$}\\ \text{ T
for $R=0$ or $S=0$}\end{array}$ \vspace{5pt}\\
$\frac{\osp{R}{2S}}{\bso{R}\oplus \bsp{2S}}$ &
$\frac{\osp{R+2\alpha+2\beta}{2S+2\alpha+2\beta}}{\osp{R+2\alpha}{2\alpha}\oplus
\osp{2\beta}{2S+2\beta}}$ & $\ \,$ T for $R=0$ or $S=0$ \vspace{5pt}\\
$\frac{\bsp{2R+2S}}{\bsp{2R}\oplus \bsp{2S}}$ &
$\frac{\osp{2\alpha+2\beta}{2R+2S+2\alpha+2\beta}}{\osp{2\alpha}{2R+2\alpha}
\oplus \osp{2\beta}{2S+2\beta}}$ & $\ \,$ T for $R=0$ or
$S=0$\vspace{5pt}\\
$\frac{\osp{2}{2R+2S}}{\osp{1}{2R}\oplus \osp{1}{2S}}$ &
$\frac{\osp{2+2\alpha+2\beta}{2R+2S+2\alpha+2\beta}}{\osp{1+2\alpha}{2R+2\alpha}
\oplus \osp{1+2\beta}{2S+2\beta}}$ & \vspace{5pt}\\
 $\frac{\bso{2R}}{\bgl{R}}$&
$\frac{\osp{2R+2\alpha}{2\alpha}}{\gl{R+\alpha}{\alpha}}$ & $\ \,$
T
for $R=0,1$\vspace{5pt}\\
 $\frac{\bsp{2R}}{\bgl{R}}$& $\frac{\osp{2\alpha}{2R+2\alpha}}{\gl{\alpha}{
R+\alpha}}$ & $\ \,$  T for $R=0$ \vspace{5pt}\\
\hline
\end{tabular}
} \caption{The left column presents the possible minimal
non-trivial sectors labelled by $R$, $S$ and the right one the
chain of models to which they belong. We denote by T the models
that have a topological subsector and by C those models that are
conformally invariant.}
\end{table}

Some of the minimal subsectors are topological. This occurs when
the whole Lagrangian is in the image of $Q$, which is the case
whenever the right side of table ($2$) can be brought to
the form $\mathfrak{g}/\mathfrak{g}$. This happens for the $\GL{N}{N}$,
$\OSP{2N+1}{2N}$ and $\OSP{2N}{2N}$  principal chiral models as
well as for the cosets
\beq \frac{\GL{N+p}{N+q}}{\GL{N}{N}\times \GL{p}{q}} \quad
\frac{\GL{2N}{2N}}{\OSP{2N}{2N}}\quad
\frac{\OSP{2N+p}{2N+2q}}{\OSP{2N}{2N}\times \OSP{p}{2q}}\quad
\frac{\OSP{2N}{2N}}{\GL{N}{N}}\nonumber\ . \eeq

On the other hand, some cohomological reductions lead to free
conformal field theories, for which there are only two
possibilities. Either they reduce to the $c=1$ free boson model to
the $c=-2$ theory of a pair of symplectic fermions. The former
case occurs for the $\OSP{2N+2}{2N}$ principal chiral model and
the real Grassmannians
\beq \frac{ \OSP{2+2m+2n}{2m+2n} }{ \OSP{1+2m}{2m}\times
\OSP{1+2n}{2n} } \ ,\label{eq:series_1} \eeq
whereas the latter occurs for the $ \PSL{N} $ principal chiral
model as well as for the cosets
\beq \frac{\GL{m+n+1}{m+n+1}}{\GL{m+1}{m}\times \GL{n}{n+1}}\qquad
\frac{\PSL{2N}}{\OSP{2N}{2N}}\ .\label{eq:series_2} \eeq
As was shown in \cite{Candu:thesis} by direct computation of the
all loop $\beta$ function, these are the only sigma models on
symmetric spaces that are conformally invariant. The superspaces
$G/G'$ in eqs~(\ref{eq:series_1}, \ref{eq:series_2}) have only one
radius and no $G$-invariant $B$-field. We thus see that the
argument of sec.~\ref{sec:cft}  leads to the same classification
of conformally invariant sigma models, while this time being
non-perturbative in nature.

\subsection{Examples involving generalized symmetric spaces}

We will now turn our attention to a few generalized symmetric
spaces in which the denominator supergroup $G'$ is left invariant
under the action of some automorphism $\Omega:G \mapsto G$ of
order four. We are not attempting to provide a classification of
such cosets, but restrict our discussion to three interesting
examples. The first series of models contains theories whose
minimal subsector is given by the sigma model for $AdS_5\times
S^5$ and $AdS_2\times S^2$ spaces. The second and third example
extend the construction of superspheres and complex projective
spaces, respectively. In all three families of models we shall
identify previously unknown candidates for conformal cosets, see
eqs.\ \eqref{eq:ex1CFT}, \eqref{eq:ex2CFT} and \eqref{eq:ex3CFT}.
\bigskip

\noindent {\sc Example 1:} We look at the coset \beq
\mathfrak{g}/\mathfrak{g}^{\prime}=\frac{\ensuremath{\text{psu}\left({2(M+m)}|{
2(N+n)}
\right)}} {\osp{2m}{2n}\oplus \osp{2N}{2M}} \eeq defined for
$M+m=N+n$ by the following automorphism of order four:
$\Omega=-st\circ \text{Ad}_{X}\circ\text{Ad}_{Y}$ with \beq
X=\left(\begin{array}{cc|cc}& \mathbbm{1}_{M+m}& &
\\-\mathbbm{1}_{M+m}& & & \\\hline & & & \mathbbm{1}_{N+n}\\& &
-\mathbbm{1}_{N+n} & \end{array}\right)\quad
Y=\text{diag}\left(\mathbbm{1}_m,-\mathbbm{1}_{2M+m},
\mathbbm{1}_{N+2n}, -\mathbbm{1}_N \right)\ . \eeq Here, in order
to properly define the automorphism, one has to embed the superalgebra
$\text{psu}\left({2(M+m)}|{2(N+n)} \right)$ in the fundamental
representation of $\text{su}\left({2(M+m)}|{2(N+n)} \right)$. The
invariant subalgebra $\mathfrak{g}^{\prime}$ is a direct sum for which
the grading of the second summand is opposite that of the first
one. In order to know the number of free parameters in the metric
and $B$ field defining the model, we have to know how the $\Omega$
eigenspaces transform under the action of $\mathfrak{g}^{\prime}$. The
result is
\beq
\mathfrak{m}_{1}\cong
{\small\Yvcentermath1{\yng(1)\otimes \yng(1)}}\qquad\mathfrak { m }
_2\cong \left(\varnothing\otimes
{\small\Yvcentermath1{\yng(1,1)}}\right)\oplus \left(
{\small\Yvcentermath1{\yng(1,1)}}\otimes \varnothing\right)\qquad
\mathfrak{m}_{3}\cong {\small\Yvcentermath1{\yng(1)\otimes\yng(1)}}\ .
\eeq Here, as well as in the following examples, $\varnothing$
denotes the trivial representation,
$\small\Yvcentermath1{\yng(1)}$ the fundamental representation and
$\small\Yvcentermath1{\yng(1)^*}$ its dual. Tensor products of the
fundamental representation and of its dual that possess certain
permutation symmetry are denoted by the appropriate Young
tableaux.

We want to mention three special cases for these cosets
\begin{itemize}
\item
Without loss of generality, we choose $Q$ to lie only in the
second direct summand of $\mathfrak{g}^{\prime}$. Assuming that $M=N$
and thus $m=n$, we see that the maximal reduction in this case
leads to the sigma model on the $\mathbb{Z}_2$ coset
$\PSU{2m}/\OSP{2m}{2m}$, which is conformal. We thus arrive at the
conclusion that the sigma models on the $\mathbb{Z}_4$ coset
spaces
\beq \label{eq:ex1CFT} \mathcal{C}_{(N,n)} \ \cong \
\frac{\PSU{2(N+n)}}{\OSP{2n}{2n}\times \OSP{2N}{2N}} \eeq
are promising candidates for conformal sigma models for all non
negative values of $N$ and $n$.
\item
If we specialize to $M=n=2$, $m=N=0$ and change the reality
conditions appropriately, we obtain the well known $\mathbb{Z}_4$
coset space $\text{PSU}\left(2,2|4\right)/\BSO{4,1}\times\BSO{5}$
whose bosonic base is $AdS_5\times S^5$.  This model cannot be
reduced any further, since $\mathfrak{g}^{\prime}$ is purely bosonic.
It constitutes the maximal reduction of the two parameter discrete
family of models 
\beq \mathcal{M}_{(m,n)}=\frac{\text{PSU}\left(2m+2n+2,
2|2m+2n+4\right)}{\OSP{2m}{ 2m+2, 2} \times\OSP {2n}{2n+4}}\ .
\eeq
\item
Setting $M=n=0$, $m=N=1$ and again taking the appropriate boundary
conditions, leads to the space
$\text{PSU}\left(1,1|2\right)/\BSO{2}\times\BSO{2}$ whose bosonic
base is $AdS_2\times S^2$. This case is the maximal reduction of
the family of sigma models with $\mathfrak{g}=\psu{2(m+n+1)}$ and
$\mathfrak{g}^{\prime}=\osp{2m+2}{2m}\oplus\osp{2n+2}{2n}$, subject to
a certain
reality conditions. %
\end{itemize}

\bigskip

\noindent {\sc Example 2:} We are interested in the $\mathbb{Z}_4$
coset \beq
\mathfrak{g}/\mathfrak{g}^{\prime}=\frac{\osp{M+2m}{2N+2n}}{\osp{p}{2q}
\oplus\osp{M-p}{
2(N-q)}\oplus \un{m}{n}}\ .  \eeq The corresponding automorphism is
$\Omega=Ad_{X}$ with \beqa X=\left(\begin{array}{cc|ccc} I_{M}^p &
& &&  \\ & J_{2m}& & &\\\hline & & I_{N}^q &&\\& & & I_N^q\\& & &&
J_{2n}\end{array}\right)\text{ where } \begin{array}{l}
I^p_n=\left(\begin{array}{cc} \mathbbm{1}_p &0\\0&
-\mathbbm{1}_{n-p}\end{array}\right)\\
\vspace{5pt}\\J_{2n}=\left(\begin{array} {cc} 0 & \mathbbm{1}_n\\
-\mathbbm{1}_n & 0\end{array}\right) \ .\end{array} \eeqa Under
the action of $\mathfrak{g}^{\prime}$, the $\Omega$ eigenspaces
transform as \beqa \mathfrak{m}_{1}&\cong&
\big({\small\Yvcentermath1{\yng(1)\otimes\varnothing\otimes
\yng(1)}}\big)\oplus
\big({\small\Yvcentermath1{\varnothing\otimes\yng(1)\otimes
\yng(1)}}\big)\nonumber\\\mathfrak{m}_2&\cong&
\left(\varnothing\otimes\varnothing\otimes
{\small\Yvcentermath1{\yng(1,1)}}\right)\oplus
\left(\varnothing\otimes \varnothing\otimes
{\small\Yvcentermath1{\yng(1,1)}^*}\right)\oplus
\big(\small\Yvcentermath1{\yng(1)\otimes \yng(1)\otimes
\varnothing}\big)\nonumber\\ \mathfrak{m}_3&\cong&
\big({\small\Yvcentermath1{\yng(1)\otimes\varnothing\otimes
\yng(1)^*}}\big)\oplus
\big({\small\Yvcentermath1{\varnothing\otimes\yng(1) \otimes
\yng(1)^*}}\big)\ , \eeqa where by $U\otimes V\otimes W$ we
understand a module defined as the tensor product of the $U,V,W$
representations of respectively $\osp{p}{2q}$, $\osp{M-p}{2(N-q)}$
and $\un{m}{n}$. When selecting the fermionic operator $Q\in
\mathfrak{g}^{\prime}$, we choose it to be fully contained in one of
the direct summands of $\mathfrak{g}^{\prime}$ . Since the first
two lead, after suitable choice of the parameters $M,N,p,q$, to
the same reduction, we will assume, that $Q$ is either in
$\osp{M-p}{2(N-q)}$ or in $\un{m}{n}$.

\begin{itemize}
\item
If now we have $M=2N$ and $p=2q$, then we can pursue the reduction
of the first type until we get rid of the orthosymplectic parts in
$\mathfrak{g}^{\prime}$ to arrive at the sigma model on the symmetric
space $\OSP{2m}{2n}/\U{m}{n}$ which is not a conformal theory.
\item If on the other hand $m=n$, then taking
the second type of reduction can be used to remove the unitary
part of $\mathfrak{g}^{\prime}$ so as to obtain the sigma model on the
symmetric space $\OSP{M}{2N}/\OSP{p}{2q}\times\OSP{M-p}{2(N-q)}$,
which is a conformal field theory for $p=1$, $q=0$ and $M=2N+2$.
We therefore come to the conclusion that for all $N,n\in
\mathbb{N}$ the sigma models on the homogeneous spaces
\beq \label{eq:ex2CFT} \mathcal{C}_{(N,n)} \ \cong \
\frac{\OSP{2N+2+2n}{2N+2n}}{\OSP{2N+1}{2N}\times \U{n}{n}} \eeq
are candidates for conformally invariant sigma models. For $n=0$
they reduce to the symmetric spaces $S^{2N+1|2N}$, i.e. the
superspheres,
whereas for $N=0$ they remain a $\mathbb{Z}_4$ homogeneous space.%
\end{itemize}
\bigskip

\noindent  {\sc Example 3:} The last case under consideration is
the $\mathbb{Z}_4$ coset \beq
\mathfrak{g}/\mathfrak{g}^{\prime}=\frac{\un{M+2m}{N+2n}}{\un{p}{q}\oplus
\un{M-p}{N-q}\oplus \un{m}{n}\oplus \un{m}{n}} \eeq defined by the
automorphism $\Omega=Ad_Y$, where \beq
Y=\left(\begin{array}{cc|cc} I_{M}^p & & &  \\ & J_{2m}& &
\\\hline & &  I_{N}^q &\\ & & & J_{2n}\end{array}\right)\ . \eeq
We need not spell out the decomposition of $\mathfrak{m}_i$ in modules
of $\mathfrak{g}^{\prime}$, it suffices to say that the only
representations that appear are of the kind $A\otimes B\otimes
C\otimes D$, where $A,B,C,D$ are either the trivial, fundamental
or dual fundamental of respectively $\un{p}{q}$, $\un{M-p}{N-q}$,
the first $\un{m}{n}$ and the second $\un{m}{n}$. We choose $Q$ to
be diagonally embedded in the $\un{m}{n}\oplus \un{m}{n}$ part of
$\mathfrak{g}^{\prime}$, so that the reduction procedure sends the
parameters $m$ and $n$ to $m-1$ and $n-1$. If $m=n$, then the
reduction terminates with the symmetric space
$\U{M}{N}/\U{p}{q}\times \U{M-p}{N-q}$. The sigma models with this
target spaces are conformal for $M=N$ and $p=q\pm1$, with the
special case $p=1$ and $q=0$ corresponds to the complex symmetric
superspaces $\CP{N}$. In conclusion, we can state that the sigma
models on the homogeneous spaces
\beq \label{eq:ex3CFT} \mathcal{C}_{(M,N,n)} \ \cong \
\frac{\U{M+N+2n}{M+N+2n}}{\U{M+1}{M}\times \U{N-1}{N}\times
\U{n}{n}\times \U{n}{n}}\ , \eeq
are expected to be conformal for values of $M,N,n\in \mathbb{N}$
with $N>0$.

\subsection{Extensions of the cohomological reduction}

In this section, we want to expand the technique of cohomological
reduction to encompass Wess-Zumino-Witten and Gross-Neveu models.

The Wess-Zumino term on the supergroup $G$ with the superalgebra
$\mathfrak{g}$ takes the form \beqa \calS_{WZ}&=&-\frac{i}{24
\pi}\int_{B} \left(g^{-1}\text{d} g,
\left[g^{-1}\text{d}g,g^{-1}\text{d} g\right]\right)\nonumber\\
&=&-\frac{i}{24 \pi}\int_{B} d^3x\, \epsilon^{\alpha\beta\gamma}
f_{abc}\,J_{\alpha}^a J_{\beta}^b  J_{\gamma}^c \ ,\eeqa where
$\epsilon^{\alpha\beta\gamma}$ is the antisymmetric tensor and
$f_{abc}$ are the structure constants. Thus, the Wess-Zumino term
is a trilinear combination of the left invariant currents $J$ and
can be cohomologically reduced in a similar fashion as the 
bilinear $\tG$ and $\tB$ terms.

A straight forward if lengthy computation shows that, choosing a
fermionic operator $Q\in \mathfrak{g}$ that squares to zero, the
cohomologically reduced model is the Wess-Zumino-Witten model on
the supergroup $H$ whose superalgebra is $\Coh_Q(\mathfrak{g})$. Thus,
the Wess-Zumino-Witten models on a supergroup reduce in the same
way as the principal chiral models on the same supergroup. It can
even happen that a principal chiral model and a WZW model both
reduce to the same theory. An interesting example of that is
furnished by the $\PSU{N}$ WZW and principal chiral, the maximal
reduction of which is the $c=-2$ free theory of a single pair of
symplectic fermions.
\smallskip

Let us now turn to a very different theory, namely the osp$(m|2n)$
Gross Neveu model for $m$ free real fermions and $n$ pairs of
bosonic ghosts. The free part of the action is determined through
{\beq \calS^{\text{GN}}_{\text{free}} \ = \ \frac{1}{2\pi}
\int_{\Sigma} d^2z \biggl[
 \sum_{i=1}^{m} \bigl(\psi_i \bar\partial \psi_i +
  \bar{\psi}_i \partial \bar{\psi}_i\bigr)
  + \sum_{a=1}^{n} \bigl(\beta_a \bar \partial \gamma_a + \bar{\beta}_a
\partial
  \bar{\gamma}_a\bigr) \biggr] \ . 
\eeq} This action defines a conformal field theory with central
charge $c= \frac{m-2n}{2}$ and both left and right
$\widehat{\mathrm{osp}}(m|2n)$ current symmetry at level $k=1$. The
conformal dimension of the fundamental fields $(\psi_i,
\beta_a,\gamma_b)$ is $h=\frac{1}{2}$. The interaction term for
this theory is {\begin{eqnarray*}
\calS^{\text{GN}}_{\text{int}}=\frac{g^2}{2\pi} \int_{\Sigma} d^2z
\left[\sum_{i=1}^{m} \psi_i \bar{\psi}_i + \sum_{a=1}^n (\gamma_a
\bar{\beta}_a - \beta_a \bar{\gamma}_a)\right]^2\ .
\end{eqnarray*}}
An alternative way of understanding this model, is to think of the
free part of the theory as a free field representation of the
Wess-Zumino-Witten model of $\OSP{2m}{2n}$ at level one and of the
interacting model as being a current-current perturbation thereof.

If at the free point, we take the fermionic nilpotent generator
\beq Q=\frac{1}{4\pi i }\left\{\oint_0
dz\left(\psi_1+i\psi_2\right)\beta_1 -\oint_0 d\bar{z}
\left(\bar{\psi}_1+i\bar{\psi}_2\right)\bar{\beta}_1\right\}\ ,
\eeq we see that \beq \calL_{\text{int}}^{GN}=
\left[\sum_{i=3}^{m} \psi_i \bar{\psi}_i + \sum_{a=2}^n (\gamma_a
\bar{\beta}_a - \beta_a \bar{\gamma}_a)\right]^2+Q\cdot B \ .\eeq
It is furthermore not hard to see that a field is in the
cohomology of $Q$ if and only if it does not contain any
contribution from the fields $\psi_1,\psi_2, \beta_1,\gamma_1$.
The cohomologically reduced model is therefore the $osp(m-2|2(n-1))$
Gross Neveu model. An interesting example is obtained if we set
$m=2n+2$, in which case the maximal reduction of the Gross Neveu
model is the massless Thirring model of two real fermions, which
is dual to the theory of a compactified free boson. On the other
hand, the compactified free boson provides the endpoint of the
cohomological reduction of the sigma models on the superspheres
$S^{2n+1|2n}$. In \cite{Candu:2008yw ,Mitev:2008yt}, it was
proposed that there exists a duality between the osp$(2n+2|2n)$ Gross
Neveu models and the sigma models on $S^{2n+1|2n}$ and what we see
here supports this claim.


\section{Conclusion and Outlook}

In this paper we have studied correlation functions of quantum
field theories with internal supersymmetry. Most of the analysis
was tailored towards 2-dimensional sigma models on coset
superspaces $G/G'$. For such theories we chose a BRST operator $Q
\in \mathfrak{g'}$ and performed a cohomological reductions to another
coset sigma model $H/H'$. Fields of the latter were shown to be in
one-to-one correspondence with the cohomology
$\Coh_Q(\cal{H}^{G/G'})$ in the state space $\cal{H}^{G/G'}$ of
the $G/G'$ model. This correspondence preserves all correlators.
Let us stress once more that reductions of this type are certainly
not restricted to sigma models. Similar arguments also apply to
non-geometric theories such as e.g.\ Gross-Neveu or
Landau-Ginsburg models.

Before we conclude, let us sketch a number of possible
applications. The results of this work have been used already in
\cite{Candu:2009ep} for an investigation of boundary spectra in
sigma models on complex projective superspaces \CP{N}. For
vanishing coupling, it is an easy combinatorial exercise to
determine the spectra of these sigma models. Our strategy then was
to calculate the spectrum at finite coupling by assuming so-called
Casimir evolution of conformal weights
\cite{Quella:2007sg,Candu:2008yw}. This assumption was tested
carefully both through background field expansions and extensive
numerical studies. From the spectrum at vanishing coupling along
with the assumed Casimir evolution, the boundary partition
functions can be constructed up to two unknown functions. This is
were the cohomological reduction comes in. In fact, we then showed
that the two unknown functions may be identified as conformal
weights of fields belonging to the symplectic fermion subsector of
the \CP{N} coset model. Since symplectic fermions are free, we
could fix all the remaining freedom and find an exact analytic
expression for boundary partition functions of the \CP{N} sigma
model.

Another application concerns the issue of conformal symmetry. For
sigma models on symmetric superspaces $G/G^{{\mathbb{Z}_2}}$ we
found a free subsector $H/H^{{\mathbb{Z}_2}}$ if and only if the
original model was conformal. We also provided several examples of
more general coset superspaces $G/G'$ that possess a free
subsector. Though we are not prepared to argue that the
corresponding $G/G'$ coset sigma models are in fact conformal, we
believe that this is the case, at least for an appropriate choice
of the background fields $\tG$ and $\tB$. In any case, the issue
certainly deserves further investigation.

As we have stated in the introduction, one of the main motivations
for the study of superspace sigma models comes from the AdS/CFT
correspondence. It is likely that the ideas of this work can be
adjusted so as to apply to models that are relevant for the study
of strings in AdS geometries. In the case of $AdS_3$, for example,
correlation functions of chiral primaries have been computed in
the NSR formalism using the explicit solution of the WZNW model on
the bosonic space $H_3^+ \times$ SU(2). A closer look at the
results of \cite{Gaberdiel:2007vu,Pakman:2007hn} shows that most of the
intricate features of the full WZNW model cancel out from the
correlation function of chiral primaries. The answer looks much
simpler than one might naively expect, very much like a three
point function in some free field theory. We hope to re-derive and
extend these findings through a cohomological reduction, after
re-phrasing the computation in the target space supersymmetric
hybrid formalism \cite{Berkovits:1999im}. A detailed analysis in
currently being pursued.

Concerning the study of strings on $AdS_5$, concrete applications
seem a little more speculative. Within the pure spinor approach,
strings in $AdS_5 \times S^5$ may be described  by coupling some
superspace coset model $G/G^{\mathbb{Z}_4}$ to the pure spinor
ghost sector \cite{Berkovits:2000fe}. The coset model is one of
the examples we described in section 5. Since its denominator
group is purely bosonic, we cannot apply our ideas to this matter
sector alone, without taking the ghost sector into account. The
combined model does possess PSL(N$|$N) supersymmetry, but the
supercharges are only realized as on-shell symmetries. In order to
apply the ideas described in this paper one would need to pass to
a new formulation in which some of the target space supercharges
become off-shell symmetries. Symmetries of the light-cone gauge
fixed Green-Schwarz superstring on $AdS_5\times S^5$ (see e.g.\
\cite{Arutyunov:2009ga}), one the other hand, are described by two
copies of a centrally extended psu(2$|$2) algebra which share
their central elements. It might be feasible to use some of the
corresponding fermionic generators for a cohomological reduction.
It seems interesting to explore such cohomological reductions for
string theory in $AdS_5$ backgrounds.


\bibliographystyle{JHEP}
\bibliography{Redbib}

\end{document}